\documentclass[letterpaper,twocolumn,10pt,
review,
annonymous,
]{article}
\usepackage{usenix2020-09}
\usepackage{amsfonts}
\usepackage{bbm}
\usepackage{booktabs}
\usepackage{makecell}
\usepackage{tikz}
\usetikzlibrary{arrows.meta}
\usepackage{adjustbox}
\usepackage{enumitem}
\usepackage{wrapfig}
\usepackage{subcaption}
\usepackage{xspace}
\usepackage[svgnames]{xcolor}

\usepackage{enumitem}

\usepackage[switch]{lineno}

\usepackage{algorithm}
\usepackage[noend]{algpseudocode}
\usepackage{mathrsfs}
\usepackage{amsmath}
\usepackage{amssymb}
\usepackage[makeroom]{cancel}
\usepackage{amsthm}

\usepackage{booktabs}

\usepackage{colortbl}

\usepackage[subtle]{savetrees}
\usepackage{array}      
\usepackage{booktabs}   
\usepackage{breqn}
\usepackage{tabularx}

\newtheorem{thm}{Theorem}
\newtheorem{lem}[thm]{Lemma}
\newtheorem{cor}[thm]{Corollary}

\usepackage[framemethod=tikz]{mdframed}
\newmdenv[
    linecolor=gray,
    linewidth=1pt,
    topline=false,
    bottomline=false,
    rightline=false,
    skipabove=2pt,
    skipbelow=1pt,
    leftmargin=0,
    rightmargin=10pt,
    innerleftmargin=3pt,
    innerrightmargin=0pt,
    innertopmargin=1pt,
    innerbottommargin=2pt,
    backgroundcolor=white
]{txtframe}
\newmdenv[
    linecolor=gray,
    linewidth=1pt,
    roundcorner=4pt,
    skipabove=2pt,
    skipbelow=1pt,
    leftmargin=0,
    rightmargin=0,
    innerleftmargin=6pt,
    innerrightmargin=6pt,
    innertopmargin=5pt,
    innerbottommargin=5pt,
    backgroundcolor=gray!10
]{graytxtframe}
\newmdenv[
    linecolor=gray,
    linewidth=1pt,
    topline=true,
    bottomline=true,
    leftline=false,
    rightline=false,
    skipabove=1pt,
    skipbelow=1pt,
    leftmargin=10pt,
    rightmargin=10pt,
    innerleftmargin=2pt,
    innerrightmargin=2pt,
    innertopmargin=2pt,
    innerbottommargin=2pt,
    backgroundcolor=white
]{insightframe}

\usepackage[utf8]{inputenc}
\usepackage[T1]{fontenc}

\newif\ifshowcomment
\showcommenttrue

\newcommand{\sys}{{\textsc{ReGuard}}\xspace}

\newcommand{\tocite}[1]{{\textcolor{red}{\textbf{[~]}}}}
\newcommand{\toref}[1]{\textcolor{red}{\textbf{N}}}

\newcommand{\ie}{\emph{i.e.,} }
\newcommand{\eg}{\emph{e.g.,} }
\makeatletter
\newcommand{\etc}{etc\@ifnextchar.{}{.\@\xspace}}
\makeatother

\newcommand{\hhy}[1]{{\footnotesize\color{orange}[hongyu: #1]}}
\newcommand{\maria}[1]{{\footnotesize\color{blue}[maria: #1]}}

\ifshowcomment
\newcommand{\TODO}[1]{\textcolor{red}{{[\small\textsf{{TODO: #1}}}]}}
\newcommand{\NOTE}[1]{\textcolor{orange}{{[\small\textsf{{NOTE: #1}}}]}}
\else
\newcommand{\TODO}[1]{}
\newcommand{\NOTE}[1]{}
\fi
\newcommand{\x}{$\times$\xspace}

 \newcommand{\myitem}[1]{\vspace*{0.02in}\noindent\textbf{#1}}
\newcommand{\remove}[1]{}

\newcommand{\mypar}[1]{{\noindent\bf #1.\ }}

 \usepackage{tcolorbox}
\tcbuselibrary{listings}
\usepackage{graphicx}

\newtcolorbox{promptbox}{
  colback=gray!5,
  colframe=gray!60,
  listing only,
  listing options={
    basicstyle=\ttfamily\small,
    breaklines=true,
    columns=fullflexible
  }
}

\usepackage[numbers,square,sort&compress]{natbib}
\usepackage{tikz}
\usepackage{amsmath}
\usepackage{textcomp}
\usepackage[subtle]{savetrees}

\begin{document}

\date{}


\title{Worst-Case Discovery and Runtime Protection for RL-Based Network Controllers}

\author{
{\rm Hongyu H\`e}\textsuperscript{*}\\
Princeton$\ $
\and
{\rm Minhao Jin}\\
Princeton
\and
{\rm Maria Apostolaki}\\
Princeton
}

\begingroup
\renewcommand\thefootnote{\fnsymbol{footnote}}
\maketitle
\footnotetext[1]{Correspondence to \texttt{hhy@g.princeton.edu}.}
\endgroup
\setcounter{footnote}{0}

\begin{abstract}

RL-based controllers achieve strong average-case performance in networking tasks such as congestion control and adaptive bitrate streaming.
Yet their performance can degrade severely under network conditions where strong performance is still achievable.
Identifying such conditions and quantifying the resulting performance gap is intractable by enumeration, while the sequential and closed-loop nature of RL controllers makes formal verification methods impractical.

We present \sys, a framework that discovers worst-case scenarios for a given RL controller and protects it against them at inference time without retraining.
Discovery is formulated as a bilevel regret-maximization problem, 
which yields a certified lower bound on the worst-case performance gap.
The discovered trajectories are then analyzed as counterfactuals and compiled into lightweight logic rules that intervene only when a risky state is detected, leaving the controller's behavior unchanged otherwise.

We evaluate \sys across three RL-based network controllers: Pensieve, Sage, and Park.
\sys discovers scenarios in which the controller's performance is 43--64\% worse than what is achievable.
\sys not only discovers gaps 57\% to 6\x larger than those found by the strongest baselines, but also shrinks them by 79--85\% via lightweight rule-based protection while preserving nominal performance.
\sys's protection extends beyond the scenarios it discovers, improving performance across a wider range of network conditions.

\end{abstract}

\section{Introduction} \label{sec:intro}

For more than a decade, RL-based controllers have been proposed for networked systems, \eg adaptive bitrate streaming, congestion control, and resource management~\cite{yan2020insitu,mao_neural_2017_pensieve,xia2022genet,abbasloo2020orca}.
These controllers have been shown to dramatically outperform the best handcrafted heuristics, making a compelling case for wide adoption.
Yet before an operator can justify deploying one in production, she needs answers to questions that benchmarks alone cannot provide.
Are there scenarios, \ie sequences of network conditions, under which the RL controller performs far worse than what is achievable?
If so, how much performance does it forfeit?
Can such performance gaps be mitigated without sacrificing the controller's performance gains on other, perhaps more common, scenarios?

Answering these questions is fundamentally hard.
First, an RL controller is a black-box neural policy that is hard to understand and modify.
Second, the space of all possible network scenarios is exponentially large.
Finally, the controller's decisions and the network's dynamics are coupled in a closed loop, where each affects the evolution of the other over time.
This complexity makes verification and formal analysis methods~\cite{namyar2024metaopt,eliyahu2021whirl} impractical because they require a formal model.
Approximating RL controllers with interpretable or differential counterparts~\cite{jacobs2022trustee,meng_interpreting_2020_metis} might offer some insights for manual inspection, but the loss of fidelity during the approximation prevents them from reliably identifying worst-case scenarios.
Finally, approaches that combine training with scenario generation or selection, such as Genet~\cite{xia2022genet} and Mowgli~\cite{agarwal2025mowgli}, aim to improve average performance and are thus orthogonal to our goal.
In fact, we find that adapting them to target worst-case scenarios does not reliably protect the target controller against them and often degrades performance on others.

This paper presents \sys, a practical framework that addresses all three questions without requiring a formal model of the controller or the network dynamics.
Rather than accumulating scenarios and hoping that training will fix everything, \sys strategically focuses on discovering and correcting scenarios of maximum controller regret, meaning scenarios where the controller performs poorly but strong performance is still achievable.
Beyond the vastness of the scenario space, determining the best achievable performance for each scenario is itself a hard optimization problem.
To address this, \sys casts scenario discovery as bilevel regret maximization over feasible network conditions.
The outer optimization searches for the scenario that maximizes the performance gap of the targeted controller, while the inner optimization identifies the best available strategy for that scenario.
For the inner loop, \sys selects among the actions of a portfolio of existing heuristics, \ie established algorithms that each perform well in different operating regimes.
Beyond making the inner optimization tractable by reducing it to selection over a finite set of known policies, the portfolio serves two additional purposes.
First, it makes the lower bound on worst-case regret certified by the bilevel optimization meaningfully tight, since the bound's quality depends on how closely the portfolio approximates the true optimum, and decades of networking research have produced heuristics that are individually strong and collectively cover diverse operating regimes.
Second, it provides a practical counterfactual, revealing what a better strategy would have done under the same conditions and in which direction the controller's decisions should be corrected.
\sys then analyzes the discovered counterfactual trajectories, \ie scenarios and controller actions, to derive recurring patterns that precede the controller forfeiting performance.
It compiles these patterns into logic rules that are evaluated only against the controller's current observable state and intervene only when a risky pattern is detected.
This design preserves the controller's nominal performance, which retraining-based approaches consistently sacrifice, while providing targeted protection in failure-prone regimes.

We demonstrate \sys's benefits and portability across three state-of-the-art RL controllers for distinct networking tasks: Pensieve for adaptive bitrate streaming, Sage for congestion control, and Park for load balancing. \sys discovers scenarios in which these controllers achieve 43--64\% worse performance than a near-optimal reference, exposing gaps orders of magnitude larger than those found by alternative discovery methods such as Genet~\cite{xia2022genet}, Indago~\cite{biagiola2024indago}, and Gilad et al.~\cite{gilad_robustifying_2019}. With \sys's run-time protection, these gaps shrink by up to 79--85\%.

Critically, \sys's protection extends well beyond the specific scenarios it discovers, confirming that it targets genuine controller vulnerabilities rather than overfitting to very particular conditions.
Indeed, \sys-protected controllers also outperform their unprotected counterparts in scenarios identified by alternative approaches (\eg Genet, Indago, and Gilad et al.), even though those scenarios were never used as counterfactuals.
Furthermore, re-running \sys's discovery against the \sys-protected controller yields substantially smaller worst-case regret, demonstrating that \sys shrinks the controller's overall vulnerability surface.
Finally, \sys fits comfortably within each controller's inference-time budget, confirming that its protection can run alongside the controller without delaying its decisions. Thanks to the efficient merging of patterns and the use of predicates over raw observable state variables, the rules are interpretable enough for manual inspection, allowing us to validate that they capture meaningful controller weaknesses. For instance, \sys's rules expose that Pensieve consistently selects bitrates far above what the available bandwidth can sustain, confirming a systematic tendency to overshoot under scarce conditions.




\remove{
\paragraph{Contributions.}
This paper makes three contributions.
\begin{enumerate}[leftmargin=*, itemsep=0pt, topsep=2pt]
 \item We formulate performance-failure discovery for RL-based network control as a bilevel solvability-weighted regret-maximization problem with appealing theoretical guarantees. 
 We show that the discovered network scenario certifies a lower bound on the worst-case solvability-weighted performance-failure objective.
 \item We develop practical implementations of this formulation across three RL-based network control tasks, certifying performance gaps of at least 80\% for Pensieve, at least 30\% for Sage, and at least X\% on a third use case.
 \item We develop two-stage rule-based run-time protection that protects the controller without fine-tuning and reveals the discovered failures in an operator-meaningful form.
 The first stage identifies risky controller states, and the second stage prescribes targeted action adjustments based on the reference policy's behavior.
 Across the three use cases, this protection reduces the discovered performance gaps by over X\% while preserving nominal performance, unlike naive fine-tuning.
\end{enumerate}
}

\section{Motivation}

\begin{figure}[t]
    \centering
    \begin{subfigure}[b]{0.35\linewidth}
        \centering
        \includegraphics[width=\linewidth]{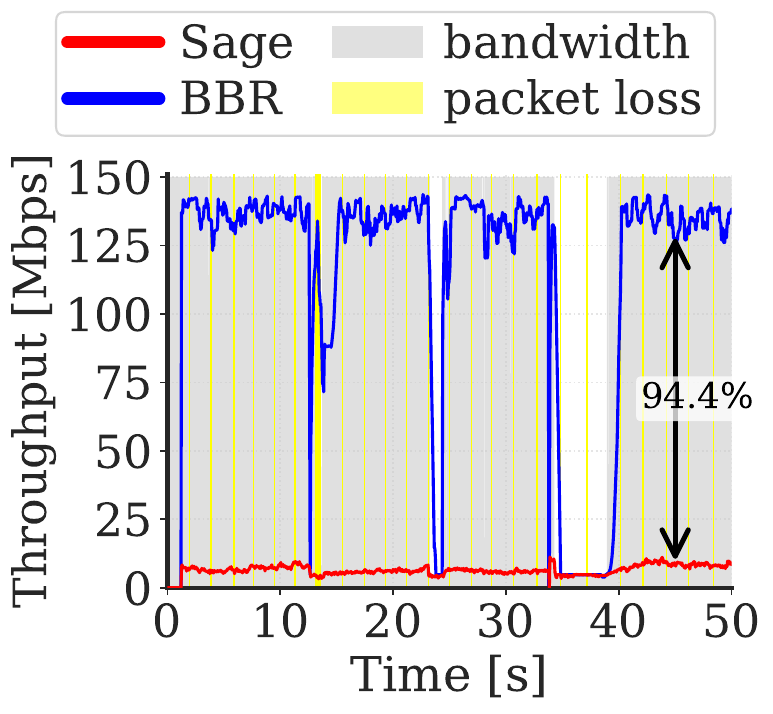}
        \caption{Vanilla Sage}
        \label{fig:motivation_timeseries}
    \end{subfigure}
    \hfill
    \begin{subfigure}[b]{0.64\linewidth}
        \centering
        \includegraphics[width=\linewidth]{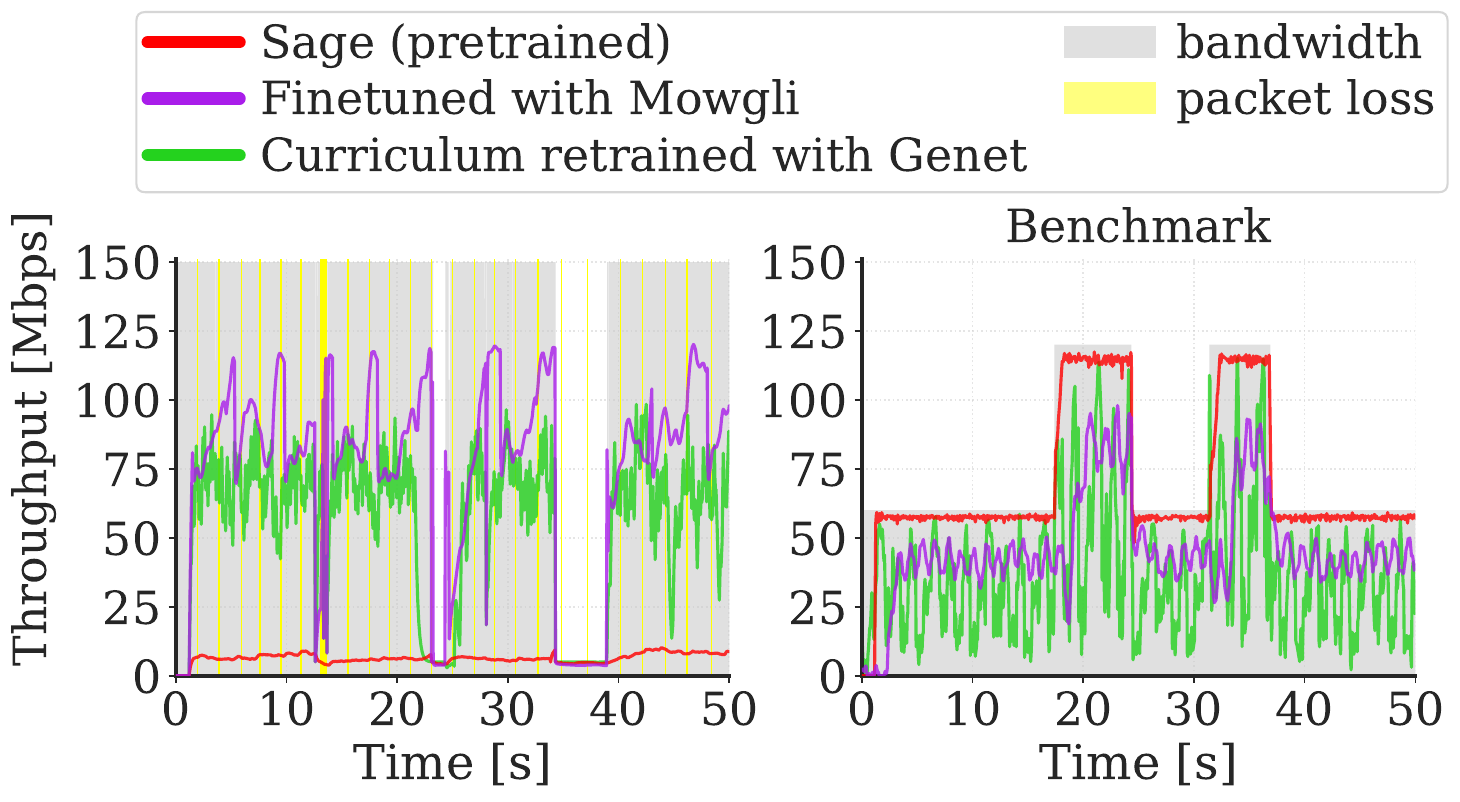}
        \caption{Fine-tuned Sage}
        \label{fig:motivation_improved}
    \end{subfigure}
    \caption{(a) shows a concrete scenario where Sage~\cite{yen2023sage} has 94\% less throughput than what BBR can achieve under the same condition, despite its superiority in other scenarios.
    (b) shows that fine-tuning Sage for protection does not lift its throughput to the achievable level in that scenario and also degrades Sage's performance on a benchmark where it performed well before fine-tuning.}
    \label{fig:motivation}
\end{figure}


This section uses RL-based congestion control as an example use case to explain why strong benchmark performance is insufficient for real-world RL deployment in networked systems and to derive the requirements for \sys.
We detail the use cases considered in Table~\ref{tab:maxregret_impl} and Appendix~\ref{app:use-cases}.

\subsection{Use Case: RL-Based Congestion Control}

Consider an operator managing a private datacenter network. To improve performance for latency-sensitive and high-throughput workloads, she considers replacing the currently deployed BBR with Sage~\cite{yen2023sage}, a state-of-the-art RL-based congestion controller that has been shown to achieve both higher throughput and lower latency. She verifies these claims on representative internal traces, and the results are encouraging. But encouraging is not enough to justify a production deployment. The operator needs to know: \textbf{How much can Sage deviate from the optimal achievable performance?} 

Her first instinct is formal verification.
Tools like MetaOpt~\cite{namyar2024metaopt} and whiRL~\cite{eliyahu2021whirl} can provide reliable worst-case guarantees, but they would require her to model, in logic, every component of the system: the neural policy that maps observations to sending-rate decisions, how those decisions determine packets in flight and thus queue occupancy, how queue occupancy in turn drives loss and inflates RTT, how the achieved throughput feeds back into the controller's next observation, and how all of these quantities are simultaneously shaped by the network conditions themselves, such as available bandwidth, propagation delay, and background traffic, independently of anything the controller does.
For a system as complex as Sage, accurate modeling is prohibitively difficult, and any approximation risks producing scenarios that mask the true worst case.

Her next thought is adversarial ML, which should, in principle, be well suited to stress-testing neural policies.
But adversarial methods such as PGD (projected gradient descent~\cite{goldstein1964convex,levitin1966constrained,madry2018towards,boyd2004convex}) operate directly on the controller's input features (shown in Table~\ref{tab:sage_state_features}) without respecting the causal dependencies between them.
As a result, PGD could, for instance, perturb Sage's delivery-rate and loss features in a way that implies near-capacity throughput and severe persistent loss, a combination that is inconsistent with feasible queueing and ACK dynamics under the same bottleneck link.
Hence, the resulting scenario may not just be unlikely, but impossible in practice.
Worse, the operator has no systematic way to check which scenarios are possible.
This issue is especially acute for RL-based network controllers.
In standard feed-forward models, one can often encode feature dependencies with moderate effort~\cite{jin2024pants,ben2024cafa,pierazzi2020problemspace}.
In contrast, RL controllers operate sequentially: current inputs depend on past actions, and future states depend on both controller decisions and network dynamics.
Capturing these dependencies would require modeling the entire interactive system, bringing our operator back to the same complexity barrier faced by formal analysis.

The remaining option is random testing.
But the search space is exponentially large in the number of control steps, making exhaustive exploration impossible.
Running random tests long enough will eventually surface scenarios where Sage underperforms BBR, but there is no guarantee that the discovered scenarios are anywhere close to the true worst case.
She is effectively left with anecdotes rather than evidence and with no basis for quantifying the deployment risk she is actually taking on.

Even if the operator could somehow discover scenarios where Sage underperforms BBR, such as the one shown in Fig.~\ref{fig:motivation_timeseries}, she still needs a way to mitigate them before enjoying the benefits of deploying Sage.
The canonical approach would be fine-tuning and/or curriculum-based retraining~\cite{kumar2019stabilizing,thompson-etal-2019-overcoming,kumar2022finetuning,luo2024optimistic,tsipras2019robustness}, which have also gained popularity in the networking community.
Unfortunately, these approaches induce global policy updates.
They alter behavior globally rather than only in specific worst-case scenarios, moving the policy away from behaviors that worked well under normal conditions.
As an illustration, Fig.~\ref{fig:motivation_improved} shows the performance of Sage after it has been fine-tuned on more scenarios, including the scenario of Fig.~\ref{fig:motivation_timeseries}.
Although its throughput improves, it still does not reach the achievable level.
Worse, its performance degrades on another scenario where it previously performed well.
Our operator now needs to decide: \textbf{Should I optimize Sage for good average performance or for robustness in the worst case?} 

\subsection{Requirements \& Design Principles}
The example above highlights the need for a system that does not replace or compete with RL-based network controllers, but instead complements them to improve their trustworthiness, robustness, and ultimately their deployability.

First, such a system must be able to discover network scenarios, namely sequences of network conditions, under which the target controller performs substantially worse than what is achievable under the same conditions.
This captures the true worst case for the controller: not merely a challenging network, but a setting where the controller fails to adapt despite strong performance still being possible.
We cannot control how harsh the environment is, but we can identify when the controller responds poorly to it.
For instance, the scenario in Fig.~\ref{fig:motivation_timeseries} where Sage achieves 94\% lower throughput than BBR is precisely the type of embarrassing failure an operator would want to know about and have fixed before deployment.

Second, a system that works alongside an RL controller must preserve the controller's strong nominal performance.
Any mitigation that sacrifices the controller's nominal performance defeats the purpose of deploying it.
As Fig.~\ref{fig:motivation_improved} illustrates, fine-tuning Sage on challenging scenarios degrades its throughput on normal conditions, trading one problem for another.

Finally, the system must remain practical to deploy. In particular, it should not require formally modeling the controller, the environment, or their interaction dynamics in advance. Such modeling is prohibitively complex for modern RL-based networked systems and would undermine usability in practice.

\section{Overview} \label{sec:overview}

\begin{figure*}[t]
    \centering
    \begin{adjustbox}{width=1.0\textwidth,center}
        \includegraphics{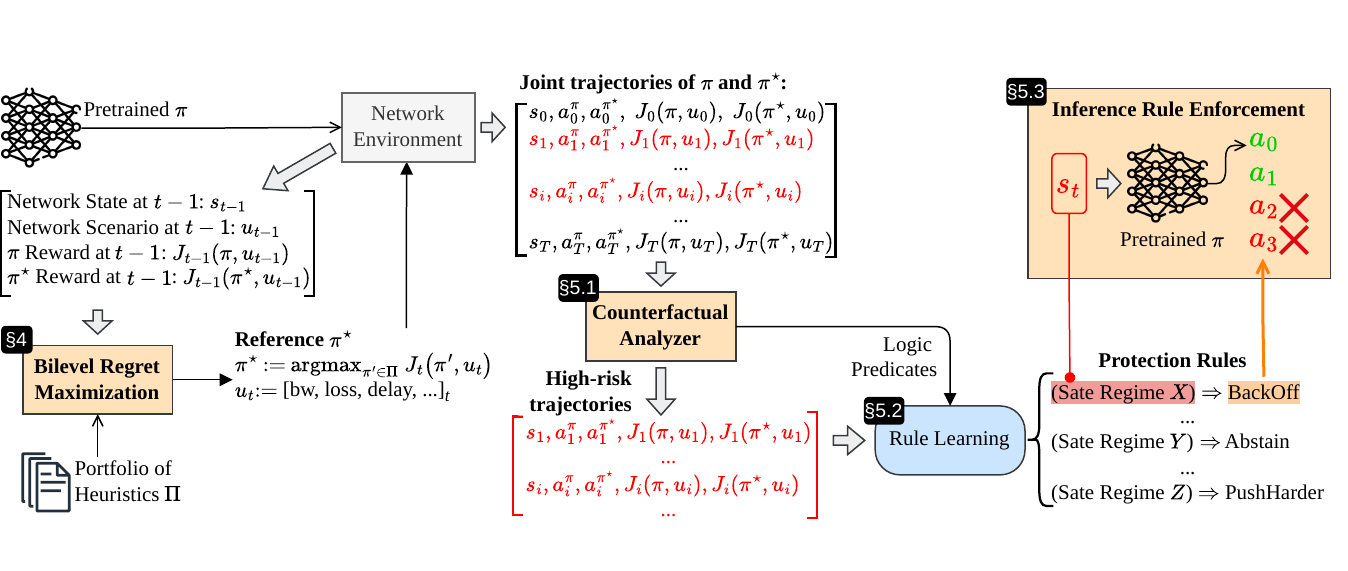}
    \end{adjustbox}
    \caption{
    \sys discovers scenarios that maximize the targeted RL controller's regret by directly interacting with the network environment in which the controller was trained and by using a portfolio of heuristics to approximate the scenario-specific optimum.
    It then analyzes the resulting counterfactuals to extract recurring risky patterns and corrective directions.
    Finally, \sys compiles those counterfactuals into logic rules that override the controller's decision at inference time only when intervention is strictly necessary.}
    \label{fig:overview}
\end{figure*}

Following these design principles, \sys takes the form shown in Fig.~\ref{fig:overview}.
Next, we highlight the main insights that drove its design.

\myitem{By framing discovery as regret maximization over a bilevel program, we can provide a tightness guarantee on how close the discovered failure is to the true worst case.}
To find scenarios that leave the most achievable performance on the table, we need to guide the search towards worst-case regret, namely, the performance gap between what the controller does and what is achievable. \sys frames discovery as an optimization over all feasible network conditions for the scenario that maximizes regret. Because this is optimization rather than sampling, the returned scenario comes with a formal bound (Theorem~\ref{thm:exact}): the gap between the regret it achieves and the true worst-case regret is bounded by how good the reference policy we have available is.
Hence, for the bound to be meaningful, we need a good approximation of the optimal policy.
Computing the exact optimal policy is, however, intractable for a stochastic dynamical system over long horizons. Fortunately, decades of networking research have produced a rich set of heuristics that each perform well in different operating regimes. 
Their per-scenario best (\ie the upper envelope) serves as a strong proxy for the scenario-specific optimum. 
To unlock this power, we use a bilevel structure where the inner problem selects the best reference for each candidate environment that the outer problem finds.

\myitem{The same upper envelope that certifies the failure also provides the counterfactual signal needed for protection.}
The portfolio of heuristics serves a second, entirely distinct role beyond certification.
When \sys discovers a high-regret scenario, it knows not only that the controller failed, but also which heuristic succeeded and what that heuristic did differently at each decision point.
This is a counterfactual: under the same network conditions, the reference chose to back off while the controller pushed harder, or vice versa.
By analyzing the controller and reference trajectories, \sys can identify specific risky states and the corrective directions needed to avoid the performance degradation.
Because these high-regret scenarios are used to improve the controller, \sys does not rely on a single worst-case scenario, but instead seeks multiple scenarios that expose different blind spots.


\myitem{The highest-regret scenarios cluster around a few recurring and high-risk decision-making flaws, making surgical intervention both possible and effective.}
The highest-regret scenarios are not random; they are clear manifestations of systematic flaws in the controller's policy. Fixing those flaws could, in principle, protect the controller while minimally affecting its performance.  
Still, because the search objective rewards the same controller weakness every time it surfaces, many of these scenarios expose the same underlying mistake. For instance, \sys scenarios often reveal that Park systematically routes small jobs to slow servers even when fast servers sit idle. We observe that the discovered scenarios naturally cluster around a small number of recurring patterns. By mining these patterns, \sys extracts a compact set of logic rules over interpretable predicates that cover a wide range of failure conditions while remaining fully auditable by the operator.


\remove{
\section{Overview of \sys} \label{sec:overview}

\mypar{Solvability-weighted regret maximization targets worst-case failures where good performance is achievable} \maria{we have not explained what regret is here}
\hhy{thanks for pointing it out.}
\sys formulates the discovery of challenging network scenarios for an RL controller as a bilevel solvability-weighted regret maximization problem (\S\ref{subsec:problem}).
The key idea is that not every low-reward scenario is informative.
If both the RL controller and a strong reference policy perform poorly, then the scenario may simply be intrinsically challenging. \maria{this is repetitive}
\sys therefore searches for scenarios in which the controller performs poorly relative to what is still achievable under the same network conditions.
Its objective prefers large performance gaps in scenarios where the consideration policy still attains meaningful reward, thereby focusing the search on substantial performance failures rather than arbitrary degradation.
This framing yields more than a better search objective.
It also yields a sharper warranty: the closer the reference policy is to the true optimum, the closer the resulting scenarios are to the true worst-case failures of the RL controller.
This insight is apparent in Fig.~\ref{fig:gap_plots}, where \sys finds the largest performance gaps in all three applications, reaching a mean log performance gap of 4.84 for Pensieve~\cite{mao_neural_2017_pensieve} and mean performance gaps of 63.84\% for Sage~\cite{yen2023sage} and 43.49\% for Park~\cite{mao2019park}. \maria{Are we saying here that by optimizing for the gap we get the largest gap?}
\hhy{Good point, indeed. We are not just maximizing the gap. We are maximizing a weighted gap.}
\maria{maybe what we want to say here is that if we were only comparing with one heuristic, we risk ignoring bad cases where the heuristic fails, while if we optimize against the optimal, we are heuristic-agnostic. Still at this point the solvability weighted is not explained in that there is no problem where the solution is correct and we also do not explicitely state how we define solvability  }

\mypar{The unattainable optimal policy can be well approximated by the upper envelope of simple heuristics.} 
In practice, the exact optimal policy for each network scenario is usually unattainable. \maria{is the problem that it is unattainable or is it unknown?}
\sys bypasses this obstacle by using a portfolio of simple heuristics as a practical reference.
Although any single heuristic may be weak or regime-specific, different heuristics perform best in different operating regimes~\cite{pappone2025mutant,yen2023sage,agarwal2025mowgli}.
Their collective upper performance envelope therefore provides a strong approximation to the scenario-specific optimum.
This approximation is central to \sys.
A stronger reference sharpens the outer search \maria{the outer search is not defined} toward real avoidable mistakes, reduces search error in the bilevel objective,\maria{why or add citation?} and strengthens the resulting certificate on the worst-case performance gap.
It also provides the counterfactual supervision needed for protection: under the same scenario, the reference policy reveals whether the controller's decision was too aggressive or too conservative, and thus what corrective direction \sys should take to adjust the suboptimal actions of the RL controller.
Fig.~\ref{fig:heatmaps} illustrates this insight: the resulting \sys configuration removes 85.04\% of the normalized performance gap for Pensieve, 80.01\% for Sage, and 79.12\% for Park on the most challenging scenario family in each application, while still transferring to other scenario families.
The reference policy is therefore not merely a baseline for evaluation, but the mechanism that bridges worst-case discovery and targeted failure protection in practice.

\mypar{Learning from the most challenging failures enables surgical protection that generalizes to weaker ones} \maria{this myitem repeats the last sentence of the previous one. the emphasis here needs to be on the rules vs retraining not again on the "hardest" failures}
\sys learns its protection rules directly from the highest-gap scenarios surfaced by the bilevel search.
This is the crucial design choice behind its protection.
The most challenging failures are valuable not only because they are severe, but also because they concentrate the controller's most costly and avoidable mistakes into a much cleaner corrective signal than diffuse sampling from mildly challenging environments.
Fine-tuning and retraining apply global updates to fix what is often a local defect, which risks degrading the controller in regimes where it already works well.
In contrast, \sys extracts simple logic rules that identify risky state regimes and apply bounded action corrections only when intervention is warranted.
Because \sys is derived from the most substantial performance failures, it naturally protects against weaker failures that arise from the same underlying decision-making flaws.
This approach yields a cascading benefit: by surgically protecting the controller against its worst-case failures, \sys also improves robustness more broadly while preserving nominal performance in low-risk regimes where the base controller already works well.
Fig.~\ref{fig:ablation} backs up this insight directly: \sys consistently removes more of the original performance gap when it is derived from more challenging source scenarios than when it is derived from weaker scenarios.
On the most challenging scenario family in each application, the resulting \sys configuration removes 85.04\% of the normalized performance gap for Pensieve, 80.01\% for Sage, and 79.12\% for Park.
\maria{overview is not a summary of eval we should not point to that many results but it should point to the main design points. Right now we are overly focusing on the large paps. for example, the logic analyzer of the overview figure is not even mentioned.}
\hhy{The logic analyzer is about counterfactual analysis for creating \sys's protection rules. I thought you wanted me to tune it down? I've sneaked it into the 2nd point.}

}

\section{Bilevel Regret Maximization over Stochastic Dynamical Systems}

\subsection{Problem Formulation}
\label{subsec:problem}

\sys searches for a feasible network scenario that maximizes the regret of a fixed pretrained controller relative to the best reference policy under that same scenario.
The input is a pretrained policy $\pi$, a reference policy class $\Pi$, and a feasible family of network scenarios $\mathcal{E}$.

\mypar{Scenario, reward, and regret}
A scenario is a finite resource sequence $e=(u_0,\dots,u_{T-1}) \in \mathcal{E}$, where each $u_t \in \mathcal{U}$ specifies the resource available at time $t$.
Thus, $e$ is a time-varying network condition rather than a static input.
For any policy $\pi'$, let
$
J(\pi';e) := \mathbb{E}\!\left[\sum_{t=0}^{T-1} \gamma^t r_t \mid \pi',e\right]
$
denote its trajectory reward in scenario $e$.
The reference policy is scenario-specific:
$
\pi_e^\star \in \arg\max_{\pi' \in \Pi} J(\pi';e).
$
The regret of the pretrained controller in scenario $e$ is then
$
R(e) := J(\pi_e^\star;e) - J(\pi;e).
$

This regret isolates genuine controller failures.
If a scenario is intrinsically bad for every policy, then both terms are low and the regret stays small.
A large regret instead means that the controller leaves substantial reward on the table in a feasible scenario where much better performance is still achievable.

\mypar{Feasible search space}
We restrict the outer search to
\begin{equation}
\label{eq:efeas}
\mathcal{E}
:=
\{ e : e \models c_j,\; j=1,\dots,m \},
\end{equation}
where the constraints $c_j$ encode the family of network conditions the operator cares about.
These constraints may bound resource magnitude, rates of change, queueing behavior, or other deployment-specific requirements.

\mypar{Bilevel regret maximization}
\sys targets the following bilevel problem:
\begin{equation}
\label{eq:outer_true}
e^\star \in \arg\max_{e \in \mathcal{E}}
\left[ J(\pi_e^\star;e) - J(\pi;e) \right],
\end{equation}
or equivalently,
\begin{equation}
\label{eq:bilevel_true}
\begin{aligned}
\operatorname*{maximize}_{e \in \mathcal{E}}
\quad & J(\pi_e^\star;e) - J(\pi;e) \\
\text{subject to}\quad & \pi_e^\star \in \arg\max_{\pi' \in \Pi} J(\pi';e).
\end{aligned}
\end{equation}
The outer level searches over feasible scenarios, while the inner level picks the strongest reference policy for each candidate scenario.
A high-scoring scenario is therefore a concrete failure case in which the controller underperforms by a large and avoidable amount.

This formulation is inherently trajectory-level.
The outer variable $e$ is a time series of resources, and both $J(\pi;e)$ and $J(\pi_e^\star;e)$ depend on the full closed-loop evolution over the horizon.
Unlike prior static-input analyses such as MetaOpt~\cite{namyar2024metaopt}, \sys searches over time-varying network scenarios for dynamical controllers.

\mypar{Outer RL solver and practical reference}
We use RL only as a solver for the outer search over scenarios.
This outer solver is distinct from the control policies $\pi$ and $\pi_e^\star$, which act \emph{within} a fixed scenario.
RL is a natural fit because the scenario itself is sequential, and because the search does not require an explicit formal model of the pretrained controller or of the network dynamics.

In practice, the exact inner solution may be unattainable, especially in continuous-action settings.
We therefore use an approximate reference policy $\hat{\pi}_e^\star \in \Pi$ and solve
\begin{equation}
\label{eq:outer_hat}
\hat{e} \in \arg\max_{e \in \mathcal{E}}
\left[ J(\hat{\pi}_e^\star;e) - J(\pi;e) \right].
\end{equation}
The theoretical target is Eqn.~\eqref{eq:bilevel_true}, while the implemented procedure approximately solves Eqn.~\eqref{eq:outer_hat}.

\subsection{Guarantees}
\label{sec:guarantees_sys}

The formulation provides a simple guarantee with a clear interpretation.
The scenario returned by \sys is a certificate that the pretrained controller has a large avoidable performance gap under a feasible network condition.

For readability, define
\begin{equation}
\label{eq:regret_defs}
R(e) := J(\pi_e^\star;e) - J(\pi;e),
\qquad
\hat{R}(e) := J(\hat{\pi}_e^\star;e) - J(\pi;e).
\end{equation}
Here $R(e)$ is the exact regret objective and $\hat{R}(e)$ is its practical approximation.

\mypar{Main guarantee}
Under realistic assumptions stated in Appendix~\ref{app:proofs}, let $\tilde e$ be the scenario returned by the outer RL solver, let $e^\star$ maximize the exact objective in Eqn.~\eqref{eq:outer_true}, let $\varepsilon$ denote the outer-solver suboptimality for Eqn.~\eqref{eq:outer_hat}, and let $\delta(e)$ bound the inner-reference error in scenario $e$.
Then
\begin{equation}
\label{eq:true_gap}
\max_{e \in \mathcal{E}} R(e) - R(\tilde e)
\le
\varepsilon + \delta(e^\star),
\end{equation}
or equivalently,
\begin{equation}
\label{eq:true_gap_alt}
R(\tilde e)
\ge
\max_{e \in \mathcal{E}} R(e) - \varepsilon - \delta(e^\star).
\end{equation}

Equation~\eqref{eq:true_gap_alt} is the key takeaway.
The exact regret achieved by the returned scenario is itself a lower bound on the worst-case exact regret over the feasible search space.
If the outer RL solver is near-optimal and the practical reference is accurate, then this lower bound is tight.
The returned scenario is therefore not merely a hard example.
It is a near-tight certificate of controller weakness in the searched regime.

\mypar{Supporting results}
The proof is built from two ingredients, both deferred to Appendix~\ref{app:proofs}.
First, the outer RL solver returns an $\varepsilon$-optimal scenario for the approximate objective:
\begin{equation}
\label{eq:guar1_alt}
\hat{R}(\tilde e) \ge \max_{e \in \mathcal{E}} \hat{R}(e) - \varepsilon.
\end{equation}
Second, the approximate regret is a pointwise lower bound on the exact regret, with gap controlled by the inner-reference error:
\begin{equation}
\label{eq:pointwise_gap}
0 \le R(e) - \hat{R}(e) \le \delta(e),
\qquad \forall e \in \mathcal{E}.
\end{equation}
Combining Eqns.~\eqref{eq:guar1_alt} and \eqref{eq:pointwise_gap} yields Theorem~\ref{thm:exact} below.

\begin{txtframe}
\begin{thm}\label{thm:exact}
Under \ref{as:nonempty}--\ref{as:oracle}, the scenario $\tilde e$ returned by \sys certifies a lower bound on the worst-case exact regret objective, and this lower bound is within $\varepsilon + \delta(e^\star)$ of the exact worst case (Eqns.~\ref{eq:true_gap} and~\ref{eq:true_gap_alt}).\footnotemark
\end{thm}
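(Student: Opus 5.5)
The argument is a short chain of inequalities built from the two supporting results, Eqns.~\eqref{eq:guar1_alt} and~\eqref{eq:pointwise_gap}, which I take as consequences of the assumptions \ref{as:nonempty}--\ref{as:oracle}. Nonemptiness of $\mathcal{E}$ and attainment of the maximum make $e^\star$ well defined. The $\varepsilon$-optimality of the outer solver gives Eqn.~\eqref{eq:guar1_alt}. The fact that $\hat{\pi}_e^\star \in \Pi$, together with the inner-error bound $J(\pi_e^\star;e) - J(\hat{\pi}_e^\star;e) \le \delta(e)$, gives Eqn.~\eqref{eq:pointwise_gap}.

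Concretely, I would first verify the pointwise sandwich. Since $\hat{\pi}_e^\star \in \Pi$ and $\pi_e^\star$ maximizes $J(\cdot;e)$ over $\Pi$, we have $J(\hat{\pi}_e^\star;e) \le J(\pi_e^\star;e)$. Subtracting the common term $J(\pi;e)$ yields $\hat R(e) \le R(e)$. The upper bound $R(e) - \hat R(e) = J(\pi_e^\star;e) - J(\hat{\pi}_e^\star;e) \le \delta(e)$ is exactly the inner-reference error assumption.

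Then I would chain, starting from the returned scenario:
\[
R(\tilde e) \;\ge\; \hat R(\tilde e) \;\ge\; \max_{e\in\mathcal{E}} \hat R(e) - \varepsilon \;\ge\; \hat R(e^\star) - \varepsilon \;\ge\; R(e^\star) - \delta(e^\star) - \varepsilon .
\]
The four steps use, in order, the lower half of Eqn.~\eqref{eq:pointwise_gap} at $\tilde e$, Eqn.~\eqref{eq:guar1_alt}, the fact that $e^\star \in \mathcal{E}$, and the upper half of Eqn.~\eqref{eq:pointwise_gap} at $e^\star$. Since $R(e^\star) = \max_{e\in\mathcal{E}} R(e)$, this is Eqn.~\eqref{eq:true_gap_alt}, and rearranging gives Eqn.~\eqref{eq:true_gap}.

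The certificate part of the statement also needs $\tilde e \in \mathcal{E}$, so that $R(\tilde e) \le \max_{e} R(e)$ is a genuine lower bound. I would take this from the feasibility assumption on the outer solver's output.

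The only delicate point is the third step. One must evaluate the approximate objective at the exact maximizer $e^\star$ rather than at the maximizer $\hat e$ of $\hat R$. This choice is what makes the bound depend on $\delta(e^\star)$ alone instead of $\sup_e \delta(e)$. A secondary issue is stochasticity: $J$ is an expectation, so if the solver's guarantee holds only with high probability or in expectation, the conclusion inherits the same qualifier. I would state the theorem under whatever form \ref{as:oracle} provides.
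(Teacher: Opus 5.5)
Your proposal is correct and matches the paper's proof essentially step for step. It uses the same pointwise sandwich $0 \le R(e)-\hat{R}(e) \le \delta(e)$, obtained from $\hat{\pi}_e^\star \in \Pi$ and \ref{as:oracle}, the same chain $R(\tilde e)\ge \hat{R}(\tilde e)\ge \hat{R}(e^\star)-\varepsilon \ge R(e^\star)-\delta(e^\star)-\varepsilon$, and the same source for $\tilde e \in \mathcal{E}$, namely \ref{as:solver}; your stochasticity caveat is not needed, since \ref{as:solver} and \ref{as:oracle} are stated deterministically in terms of the expected rewards $J$.
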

\end{txtframe}\footnotetext{Proof in Appendix~\ref{app:proof-exact}.}

\mypar{Why this guarantee matters}
It directly quantifies the strength of the discovered failure case.
The only sources of looseness are the outer-search error $\varepsilon$ and the inner-reference error $\delta(e^\star)$.
As either component improves, the certificate strengthens immediately.
This makes the formulation useful both analytically and operationally: it identifies a concrete failure scenario and quantifies how close that scenario is to the worst avoidable failure in the feasible regime under study.

\subsection{Implementation}
\label{subsec:maxregret_impl}

\begin{table*}[t]
\centering
\begin{adjustbox}{width=\textwidth,center}
\footnotesize
\begin{tabularx}{\textwidth}{>{\raggedright\arraybackslash}p{0.13\textwidth}>{\raggedright\arraybackslash}p{0.060\textwidth}>{\raggedright\arraybackslash}p{0.1\textwidth}>{\raggedright\arraybackslash}p{0.15\textwidth}>{\raggedright\arraybackslash}p{0.22\textwidth}>{\raggedright\arraybackslash}X}
\toprule
\textbf{Use Case} & \textbf{RL Type} & \textbf{Training Environment} & \textbf{Network Scenario Variables} & \textbf{Reward (performance)}: $J(\pi;e)$ & \textbf{Reference Policy} \\
\midrule
CCA (Sage~\cite{yen2023sage}) & Offline & Emulation & Bandwidth, loss, delay &
\(\begin{gathered}\sum_t\left(\alpha\cdot\,\mathrm{rate}_t+\beta\cdot\,\mathrm{rtt}_t-\gamma\cdot\,\mathrm{loss}_t\right)\end{gathered}\) &
Best performance among Cubic, BBR, and NewReno. \\
ABR (Pensieve~\cite{mao_neural_2017_pensieve}) & Online & Simulation & Bandwidth &
\(\begin{gathered}\sum_t\left(\alpha\cdot b_t-\beta\cdot r_t-\gamma\cdot\lvert b_t-b_{t-1}\rvert\right)\end{gathered}\) &
Exact $K$-step rolling bitrate oracle. \\
LB (Park~\cite{mao2019park}) & Online & Simulation & Job arrival time, job size &
$\alpha\cdot\int_0^T N_{\mathrm{active}}(t)\,dt$ &
Best performance among LCT, JSQ, and CFS. \\
\bottomrule
\end{tabularx}
\end{adjustbox}
\caption{Implementation summary for the instantiations of Eqn.~\eqref{eq:outer_hat} for the three use cases described in Appendix~\ref{app:use-cases}.
For Sage, $\alpha=0.60$, $\beta=0.25$, and $\gamma=0.15$; $\mathrm{rate}_t$ is delivery rate divided by path capacity and clipped to $[0,1]$, $\mathrm{rtt}_t$ is base RTT divided by current RTT and clipped to $[0,1]$, and $\mathrm{loss}_t$ is loss normalized by path capacity and clipped to $[0,1]$.
For Pensieve, $b_t$ is bitrate in Kbps, $r_t$ is rebuffering time, $\alpha=\gamma=10^{-3}$, and $\beta=4.3$.
For Park, $N_{\mathrm{active}}(t)$ is the number of active jobs, $T$ is the episode horizon, and $\alpha=1/\texttt{reward\_time\_scale}$.
LCT denotes Least Completion Time, JSQ denotes Join Shortest Queue, and CFS denotes Choose Fastest Server.}
\label{tab:maxregret_impl}
\end{table*}

\mypar{High-level design}
Table~\ref{tab:maxregret_impl} summarizes how \sys instantiates Eqn.~\eqref{eq:outer_hat} across three use cases detailed in Appendix~\ref{app:use-cases}: congestion control, adaptive bitrate streaming, and load balancing.
Each implementation exposes a small set of scenario variables to the outer RL solver, keeps the pretrained controller $\pi$ fixed, and constructs a practical reference policy $\hat{\pi}_e^\star$ under the same generated scenario.
At time $t$, the solver observes the controller-induced system state, chooses the next resource element $u_t$, and appends it to the scenario $e=(u_0,\dots,u_{T-1})$.
\sys then evaluates $\pi$ and the reference under this shared scenario and uses the resulting regret from Eqn.~\eqref{eq:outer_hat} as the learning signal.

The main implementation contract is to make the comparison fair and diagnostic.
\sys must keep the exogenous scenario shared, keep each policy's closed-loop state isolated, and measure the reward gap using the application-specific reward in Table~\ref{tab:maxregret_impl}.
The solver is therefore not rewarded for making every policy perform poorly.
It is rewarded for finding challenging conditions where $\pi$ performs poorly while $\hat{\pi}_e^\star$ still obtains high reward.
Each implementation defines $\hat{\pi}_e^\star$ as the best policy in its practical reference class:
\begin{equation}
\label{eq:impl_reference}
J(\hat{\pi}_e^\star;e)
:=
\operatorname*{maximize}_{\pi' \in \Pi}\ J(\pi';e).
\end{equation}
Eqn.~\eqref{eq:impl_reference} defines the reference used by the implementation over its chosen $\Pi$, even though this reference may still approximate the true scenario-specific optimum in Eqn.~\eqref{eq:bilevel_true}.
For Sage~\cite{yen2023sage} and Park~\cite{mao2019park}, $\Pi$ is a portfolio because prior work shows that different hand-designed policies perform best in different operational regimes, making their upper envelope a strong practical reference~\cite{pappone2025mutant,yen2023sage,agarwal2025mowgli}.
A \emph{stronger} $\hat{\pi}_e^\star$ improves the reward signal for the outer solver and reduces the oracle error $\delta(e)$ in \ref{as:oracle}, which directly tightens the certificate in Theorem~\ref{thm:exact}.

\mypar{Congestion control}
The instantiation of Sage~\cite{yen2023sage} (Fig.~\ref{fig:sage_implementation}) is systems-heavy because \sys evaluates a real congestion-control stack inside process-level emulation rather than a pure simulator.
\sys extends Mahimahi~\cite{netravali2015mahimahi} from a passive trace player into an online-controlled, multi-policy emulator.
At each outer step, $u_t$ specifies the bottleneck link conditions for the next control interval.
The common training mode maps a normalized solver action logarithmically to a shared bottleneck bandwidth, while richer configurations also control loss and propagation delay.
The logarithmic bandwidth map is useful because transport behavior often changes by ratios rather than absolute Mbps increments.
It gives the solver resolution in both low-capacity and high-capacity regimes without exposing an unnecessarily large action space.

Sage uses a best-of-portfolio reference over conventional congestion controllers.
\sys runs Sage, BBR, Cubic, Reno, and any additional reference policies under the same link evolution, then applies Eqn.~\eqref{eq:impl_reference} using the normalized transport reward in Table~\ref{tab:maxregret_impl}.
The reward combines delivery rate relative to available capacity, RTT inflation relative to the base RTT, and loss normalized by capacity.
This normalization makes scores comparable across the bandwidth range that the solver can generate.
The objective in Eqn.~\eqref{eq:outer_hat} then favors selective performance failures: bandwidth transitions, latency regimes, or loss bursts where Sage underreacts, overreacts, or follows a poor recovery trajectory while at least one conventional controller remains effective.

The central Sage challenge is synchronized multi-policy emulation.
\sys launches one isolated Mahimahi child instance per policy, with separate ports, actor identifiers, runtime directories, TCP state, queue dynamics, and shared-memory observation channels.
Isolation prevents one controller's packets, queues, or control history from contaminating another controller's trajectory.
At the same time, \sys treats these child instances as one logical experiment by applying the same $u_t$ to every child at the same logical step and effective timestamp.
This design turns the weak condition that the same action was sent into the stronger condition that the same link change was actually installed for every policy before the gap was scored.

The Mahimahi extension relies on an online control plane rather than a fixed trace loaded at startup.
The control plane carries per-direction bandwidth, loss, delay, queue parameters, logical step identifiers, flags, and the absolute time at which an update should become effective.
The emulator reports telemetry back to the solver, including the applied link settings, queue occupancy, queue delay, departure rate, dropped packets, dropped bytes, and applied logical step.
\sys uses this telemetry to detect hidden divergence, such as stale link settings, skewed update times, placeholder observations, or child processes that exit early.
When the synchronized comparison cannot be trusted, the step is truncated rather than silently producing a misleading gap.

The Sage implementation also supports structured short-timescale loss inside a solver interval.
Instead of only applying an independent scalar loss probability, \sys can realize deterministic loss over short bins, which lets the solver express bursty or temporally correlated loss patterns while keeping $u_t$ compact.
This detail matters because many transport failures arise from temporal structure, not just average capacity or average loss.
Together, online link control, lockstep emulation, and best-of-portfolio scoring make the Sage implementation a diagnostic testbed for policy-specific congestion-control failures.

\mypar{Other controllers}
Due to space constraints, the corresponding Pensieve and Park implementations appear in Appendix~\ref{app:maxregret-implementations}.


\section{Protecting RL Controllers against Performance Failures}
\label{sec:shield}

\begin{figure}[t]
    \centering
    \begin{adjustbox}{width=1\linewidth,center=0pt}
    \includegraphics[width=\linewidth]{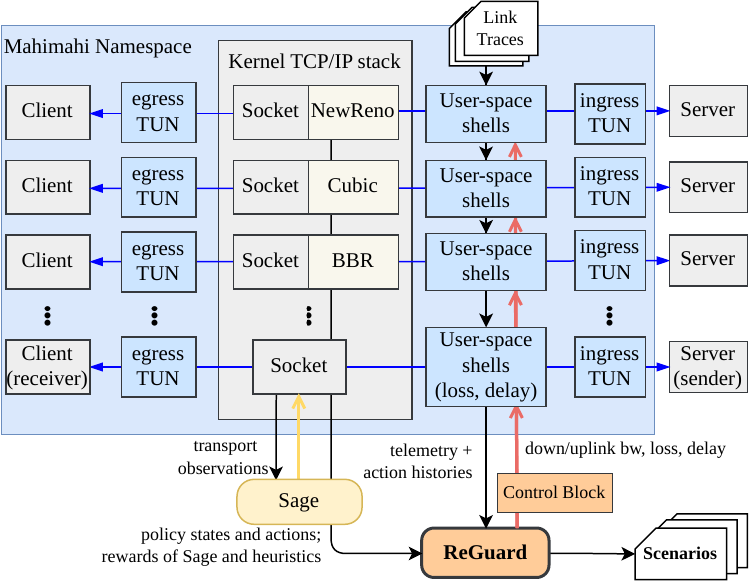}
    \end{adjustbox}
\caption{The key design point of Sage's instantiation is end-to-end counterfactual consistency: \sys runs Sage and the reference controllers in synchronized but isolated emulation, then reuses the resulting controller-reference trajectories to identify risky states and derive small corrections.}
\label{fig:sage_implementation}
\vspace{-4pt}
\end{figure}

\sys uses the scenarios found by the search in \S\ref{subsec:maxregret_impl} to learn lightweight inference-time protection for the pretrained controller.
This protection does not retrain the controller, replace it with a reference policy, or run the reference policy online.
Instead, it distills controller-reference comparisons from challenging scenarios into interpretable rules that decide when to trust the controller and when to apply a small bounded correction.
This turns failure discovery into protection: the search exposes where the controller leaves reward on the table, and \sys converts those failures into targeted interventions that preserve nominal behavior elsewhere.

\subsection{Identifying Risky Control States}

Failure discovery alone does not reveal which decisions caused the performance gap.
A high-regret scenario shows that the controller underperforms, but it does not say which states are risky, whether the controller is too aggressive or too conservative, or how the action should change.
\sys addresses this gap with a \emph{Counterfactual Analyzer} (Fig.~\ref{fig:overview}), which treats the reference-policy portfolio as a set of counterfactual teachers.
For each decision point $t$, it records the controller state $s_t$, the controller action $a_t=\pi(s_t)$, the reference action induced by $\hat{\pi}_e^\star$, and the rewards achieved under the same scenario.

The Counterfactual Analyzer converts trajectory-level failures into state-level supervision.
It identifies states where the controller is consistently suboptimal, compares the controller action with the reference action under the same observed state, labels the corrective direction, and summarizes recurring discrepancies as operator-meaningful patterns such as overly aggressive behavior, overly conservative behavior, or instability in a specific regime.
A state is risky only when the controller leaves achievable reward on the table:
$$
\mathrm{Risky}(s_t)=1
\quad\Longleftrightarrow\quad
J(\hat{\pi}_e^\star;e)-J(\pi;e) \text{ is large.}
$$
This criterion keeps \sys's protection focused on avoidable controller mistakes rather than intrinsically hard scenarios where every policy performs poorly.

The same counterfactual comparison provides the corrective label.
\sys uses a small application-independent label set:
\begin{equation}
\label{eq:adjustment_set}
\mathcal{C}_{\mathrm{adjustment}}
:=
\{\textsc{abstain},\textsc{back\_off},\textsc{push\_harder}\}.
\end{equation}
\textsc{abstain} leaves the controller action unchanged.
\textsc{back\_off} means the controller is acting too aggressively for the observed state.
\textsc{push\_harder} means the controller is acting too conservatively.
When the regret is small, the reference is not reliably better, or the corrective direction is ambiguous, the Counterfactual Analyzer outputs \textsc{abstain}.
Thus, the heuristic portfolio is not merely a benchmark; it is the source of counterfactual action labels that make protection actionable.

\subsection{Learning Protection Rules from Risky States}

\sys learns its inference-time protection as threshold-based logic rules over observable controller state variables.
The goal is not to fit an opaque surrogate for the controller.
The goal is to learn hard associations between state regimes and the adjustment labels produced by the Counterfactual Analyzer.
To do this, \sys uses NetNomos~\cite{he2026netnomos}, a rule-learning framework that extracts compact logic rules from network data and predicates.

\mypar{Predicate construction}
\sys constructs the predicate vocabulary from normal controller behavior.
It runs the controller on normal traces, computes percentile thresholds for each observable state variable, and converts those thresholds into predicates.
For a raw feature $x_i(s)$, examples include
$$
x_i(s) \ge q_{i,95}^{\mathrm{normal}},
\qquad
x_i(s) \le q_{i,25}^{\mathrm{normal}},
\qquad
x_i(s) \ge q_{i,90}^{\mathrm{normal}},
$$
where $q_{i,p}^{\mathrm{normal}}$ is the $p$th percentile of feature $i$ on normal traces.
These predicates make rules interpretable because each condition is expressed relative to the controller's nominal operating regime.
For example, they can capture unusually high loss or RTT inflation for Sage, low buffer or high download delay for Pensieve, and load imbalance or large incoming jobs for Park.

\mypar{Rule synthesis}
NetNomos receives the predicates and the Counterfactual Analyzer's adjustment labels.
It learns implications of the form
$$
\phi_1(s) \wedge \cdots \wedge \phi_k(s)
\implies
\mathrm{adjustment}(s)=c,
\quad c \in \mathcal{C}_{\mathrm{adjustment}}.
$$
The learned rules are deliberately simple: when their predicates match, they prescribe one adjustment.
This keeps the deployed mechanism small, makes the learned rules inspectable, and gives the operator a concise explanation of when the controller tends to fail and what direction the correction should take.

\mypar{Iterative refinement}
The initial protection rules may still miss residual failure states that are slightly less severe than the ones initially found, so \sys refines them through a dataset-refinement loop.
Each round trains the scenario search against the \emph{currently protected} controller, not the original unprotected controller.
The new scenarios are therefore counterexamples to the current protection rules: they expose states that bypass or stress the rules already learned.
\sys replays these scenarios, uses the Counterfactual Analyzer to compute oracle gaps and per-action counterfactual labels for the newly visited states, and adds the resulting examples to a cumulative dataset.

\sys then re-learns the rule set \emph{from scratch} over the aggregated dataset rather than appending patches to the old rules.
This design is the key distinction.
The dataset evolves across refinement rounds, but each rule set is synthesized as one coherent protection policy over all evidence collected so far.
This approach avoids logical conflicts between old and new clauses, prevents rule explosion from accumulating special-case exceptions, and lets the learner revise earlier boundaries when later counterexamples show that they were too permissive or too restrictive.

The refinement loop focuses learning capacity where it matters most.
If the current protection already covers one class of risky states, the next search is pushed toward remaining blind spots.
Because earlier examples are retained, the learner also preserves evidence about where to intervene and where to abstain, reducing overfitting to the latest scenario distribution.
As a result, refinement broadens coverage of truly risky regimes while keeping the final rule set compact, internally consistent, and grounded in observable decision-time features.

\subsection{Efficient Inference-time Rule Enforcement}

The learned protection operates at every controller decision point during inference~\cite{he2025lejit}, but enforcement is cheap because it is memoryless.
At time $t$, it evaluates predicates on the current observable state $s_t$ and adjusts only the controller's proposed action.
It does not track full trajectories, run reference policies, estimate regret online, or solve an online optimization problem.

This design matches the standard RL interface.
RL controllers already encode recent history, such as throughput samples, delay samples, prior actions, or queue summaries, into the current state.
Under the Markov assumption~\cite{sutton2018reinforcement,mnih2015human,jay2019deep,dong2018pcc,mccallum1995reinforcement}, the current state captures the information needed for the next decision.
\sys uses the same premise for protection: if a risky regime appears in $s_t$, a rule over $s_t$ is sufficient to protect the action.

Formally, let $\mathcal{R}$ be the learned rule set and let $\pi(s_t)$ be the action proposed by the pretrained controller.
\sys computes
$$
c_t := g_{\mathcal{R}}(s_t)
\quad\text{and}\quad
a_t^{\mathrm{prot}} := h(\pi(s_t), c_t, s_t),
$$
where $g_{\mathcal{R}}$ maps the current state to an adjustment label and $h$ maps that label to an application-specific correction.
If $c_t=\textsc{abstain}$, then $a_t^{\mathrm{prot}}:=\pi(s_t)$.
If $c_t$ is \textsc{back\_off} or \textsc{push\_harder}, then $h$ applies a bounded correction in the direction supported by the Counterfactual Analyzer's labels.
The controller remains the primary decision maker, since \sys intervenes only when the learned rules provide a clear reason to do so.

\sys further reduces enforcement cost with a prefix trie~\cite{beckett2026concord}.
Each rule is a conjunction of boolean predicates, so rules with shared predicate prefixes can be stored in a common tree.
At runtime, \sys evaluates predicates along reachable trie paths and prunes entire subtrees as soon as a required predicate is false.
This method avoids scanning every rule independently and makes enforcement depend on the matched predicate structure rather than the raw number of rules.
As a result, the controllers protected by \sys meet their inference-time deadlines in our experiments (Fig.~\ref{fig:controller_latency_plots}).

\subsection{Implementation}

\mypar{Congestion control}
For Sage~\cite{yen2023sage}, the protected controller consumes the 69-dimensional policy observation summarized in Table~\ref{tab:sage_state_features}.
These features cover transport signals such as RTT, RTT variation, delivery rate, recent delivery-rate summaries, loss, min-RTT ratios, prior action values, elapsed time, and derived congestion or utilization ratios.
Sage's action space is continuous, so the generic adjustments are implemented as bounded nudges to the primary scalar control dimension.
If \sys abstains, Sage's action is left unchanged:
$
a_t^{\mathrm{prot}} := a_t.
$
If \sys deduces \textsc{back\_off}, it decreases the scalar action by a small step $\Delta$ and clips the result to the legal action range:
$
a_t^{\mathrm{prot}} := \mathrm{clip}(a_t-\Delta,\ a_{\min},\ a_{\max}).
$
If \sys deduces \textsc{push\_harder}, it increases the action symmetrically:
$
a_t^{\mathrm{prot}} := \mathrm{clip}(a_t+\Delta,\ a_{\min},\ a_{\max}).
$
This implementation preserves Sage as the base controller and changes only the degree of aggressiveness when the rules identify a state where the Counterfactual Analyzer indicates a clear corrective direction.

\mypar{Other controllers}
Pensieve and Park instantiate the same architecture with application-specific state predicates and bounded action corrections.
In all cases, the Counterfactual Analyzer provides the risky-state labels and corrective directions, NetNomos turns them into interpretable rules, and \sys applies only small adjustments when those rules fire.
Due to space constraints, the detailed Pensieve and Park protection implementations appear in Appendix~\ref{app:shield-implementations}.

\section{Evaluation} \label{sec:eval}

\begin{figure*}[t]
    \centering
    \begin{subfigure}[t]{0.33\textwidth}
        \centering
        \includegraphics[width=\linewidth]{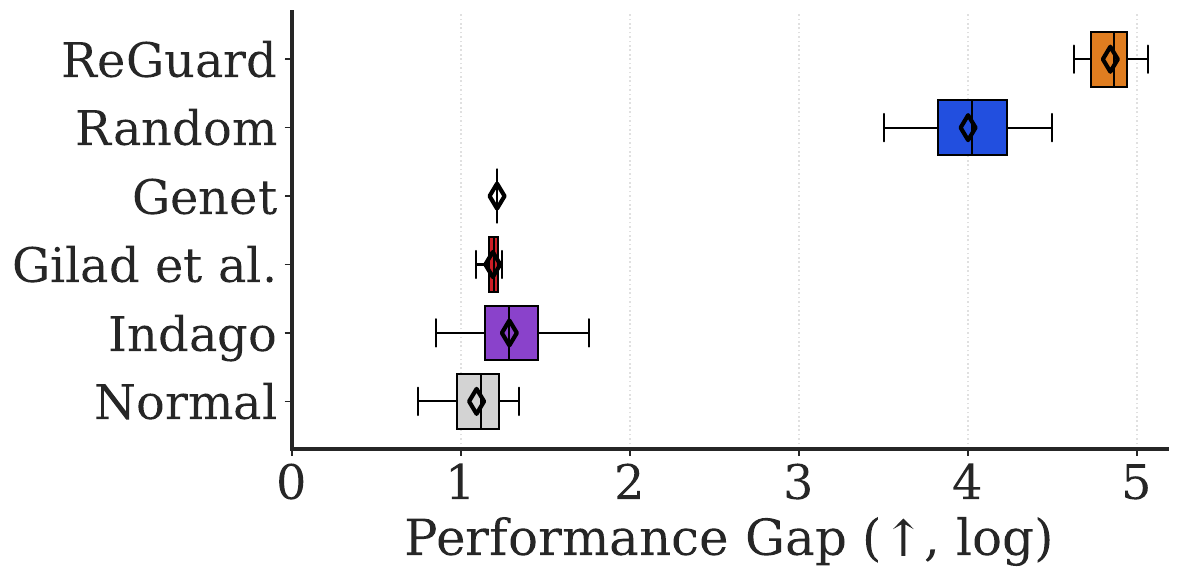}
        \caption{Pensieve.}
        \label{fig:pensieve_gap}
    \end{subfigure}
    \hfill
    \begin{subfigure}[t]{0.33\textwidth}
        \centering
        \includegraphics[width=\linewidth]{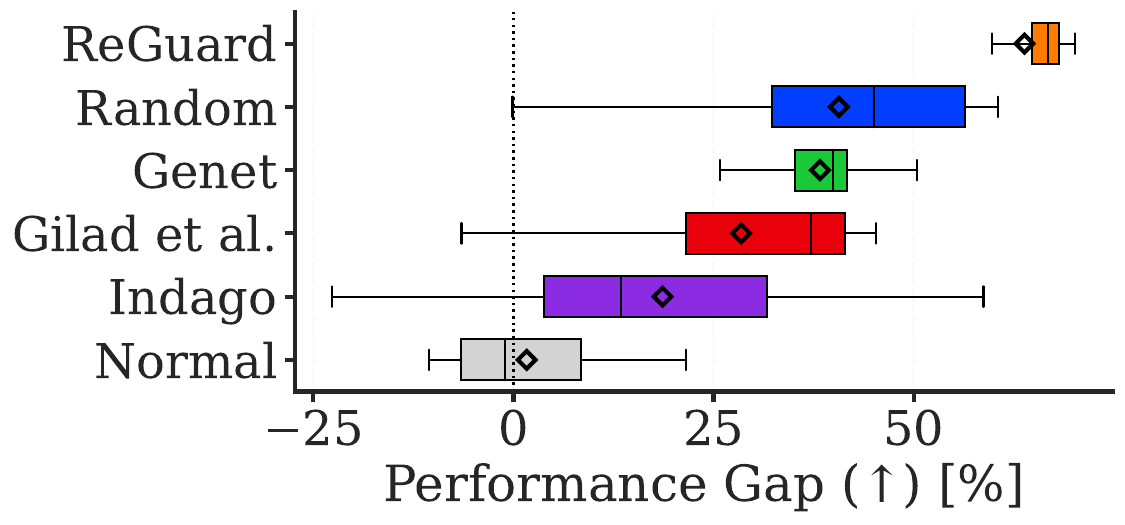}
        \caption{Sage.}
        \label{fig:sage_gap}
    \end{subfigure}
    \hfill
    \begin{subfigure}[t]{0.33\textwidth}
        \centering
        \includegraphics[width=\linewidth]{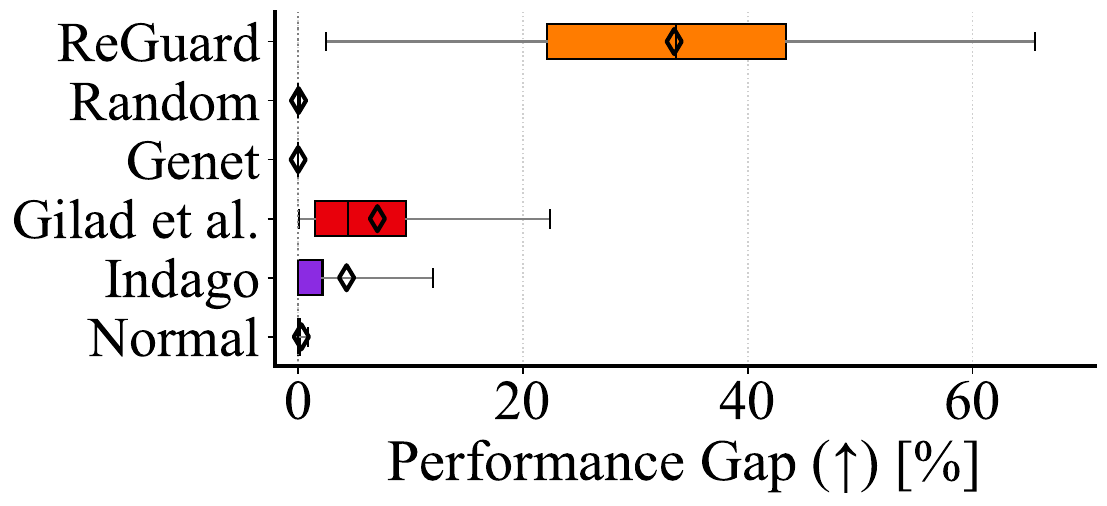}
        \caption{Park.}
        \label{fig:park_gap}
    \end{subfigure}
    \caption{\sys finds the most challenging scenarios in all three use cases.
    The gap is consistently larger than the strongest baseline and far above the Normal regime, showing that \sys exposes large avoidable underperformance rather than marginally harder tests.}
    \label{fig:gap_plots}
\end{figure*}

\begin{figure*}[t]
    \centering
    \begin{subfigure}[t]{0.33\textwidth}
        \centering
        \includegraphics[width=\linewidth]{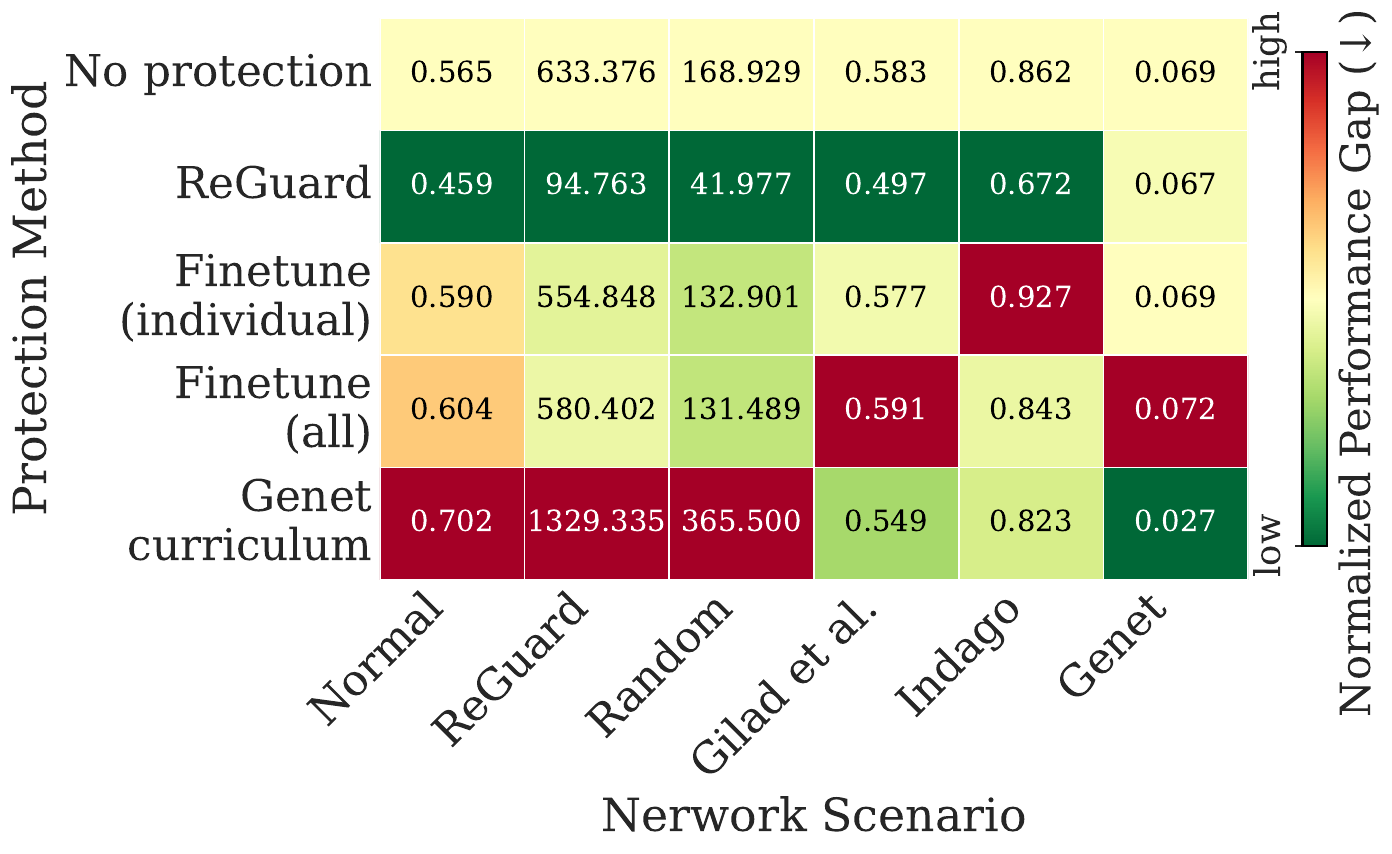}
        \caption{Pensieve.}
        \label{fig:pensieve_heatmap}
    \end{subfigure}
    \hfill
    \begin{subfigure}[t]{0.33\textwidth}
        \centering
        \includegraphics[width=\linewidth]{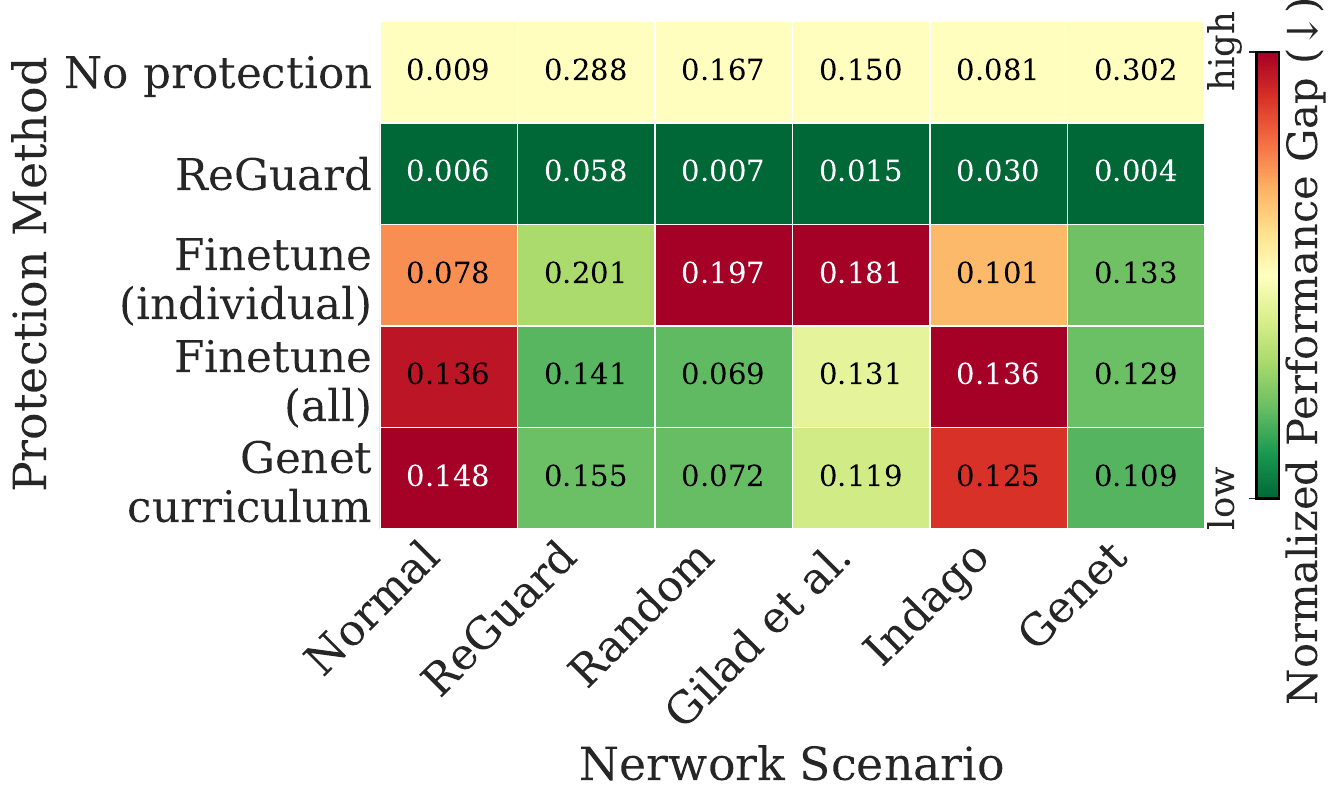}
        \caption{Sage.}
        \label{fig:sage_heatmap}
    \end{subfigure}
    \hfill
    \begin{subfigure}[t]{0.33\textwidth}
        \centering
        \includegraphics[width=\linewidth]{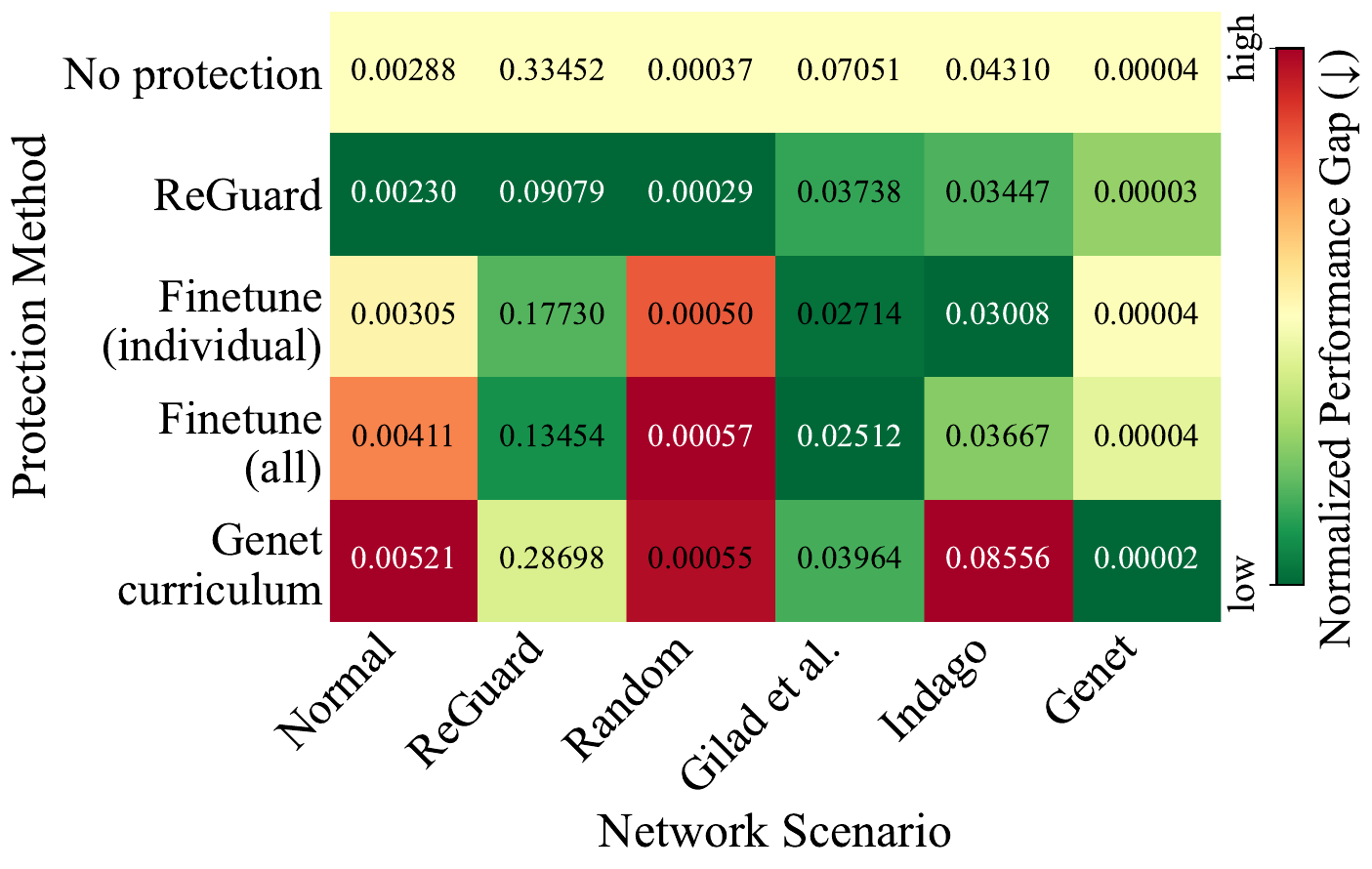}
        \caption{Park.}
        \label{fig:park_heatmap}
    \end{subfigure}
    \caption{\sys provides the strongest overall protection when derived from \sys scenarios, while preserving nominal performance.
    The results of \sys shown here come from its \emph{very first iteration}, and later iterations provide even more protection (see Fig.~\ref{fig:iterations}).
    \sys is most effective on the most challenging scenarios, but it also transfers to other, less challenging scenario families, indicating that it captures recurring risky states rather than memorizing patterns from one method.}
    \label{fig:heatmaps}
\end{figure*}

\begin{figure*}[t]
    \centering
    \begin{subfigure}[t]{0.33\textwidth}
        \centering
        \includegraphics[width=\linewidth]{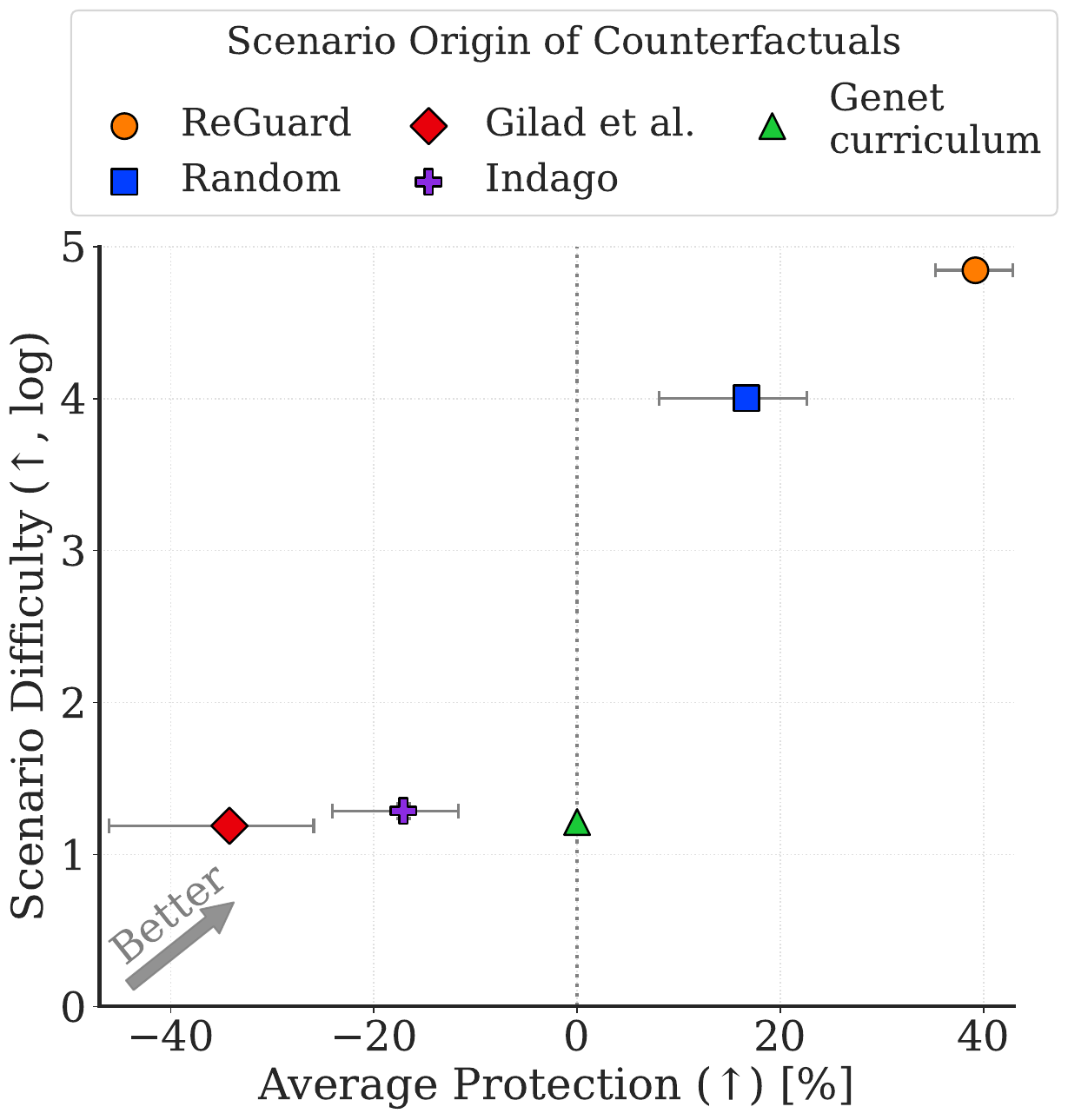}
        \caption{Pensieve.}
        \label{fig:pensieve_ablation}
    \end{subfigure}
    \hfill
    \begin{subfigure}[t]{0.33\textwidth}
        \centering
        \includegraphics[width=\linewidth]{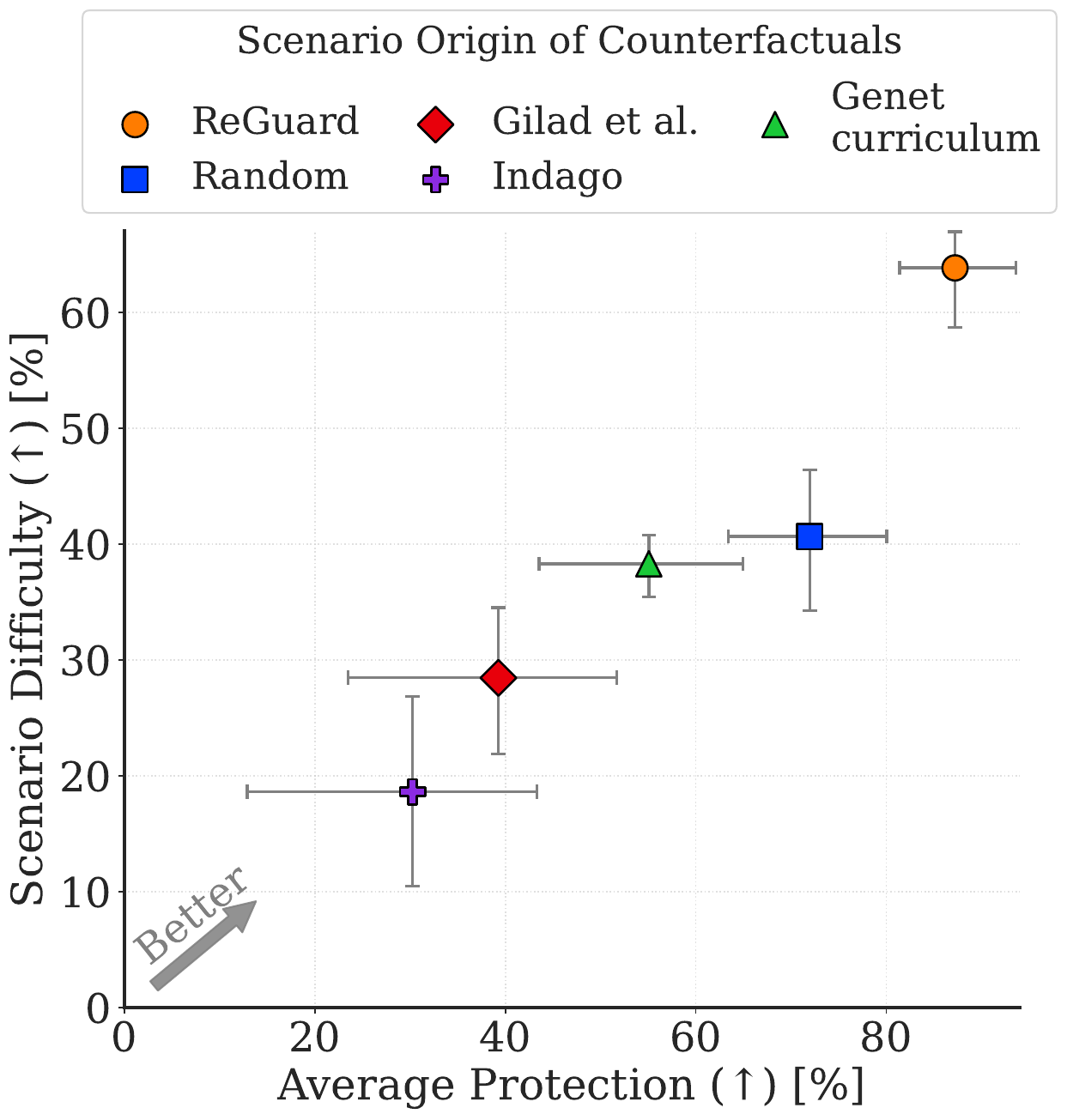}
        \caption{Sage.}
        \label{fig:sage_ablation}
    \end{subfigure}
    \hfill
    \begin{subfigure}[t]{0.33\textwidth}
        \centering
        \includegraphics[width=\linewidth]{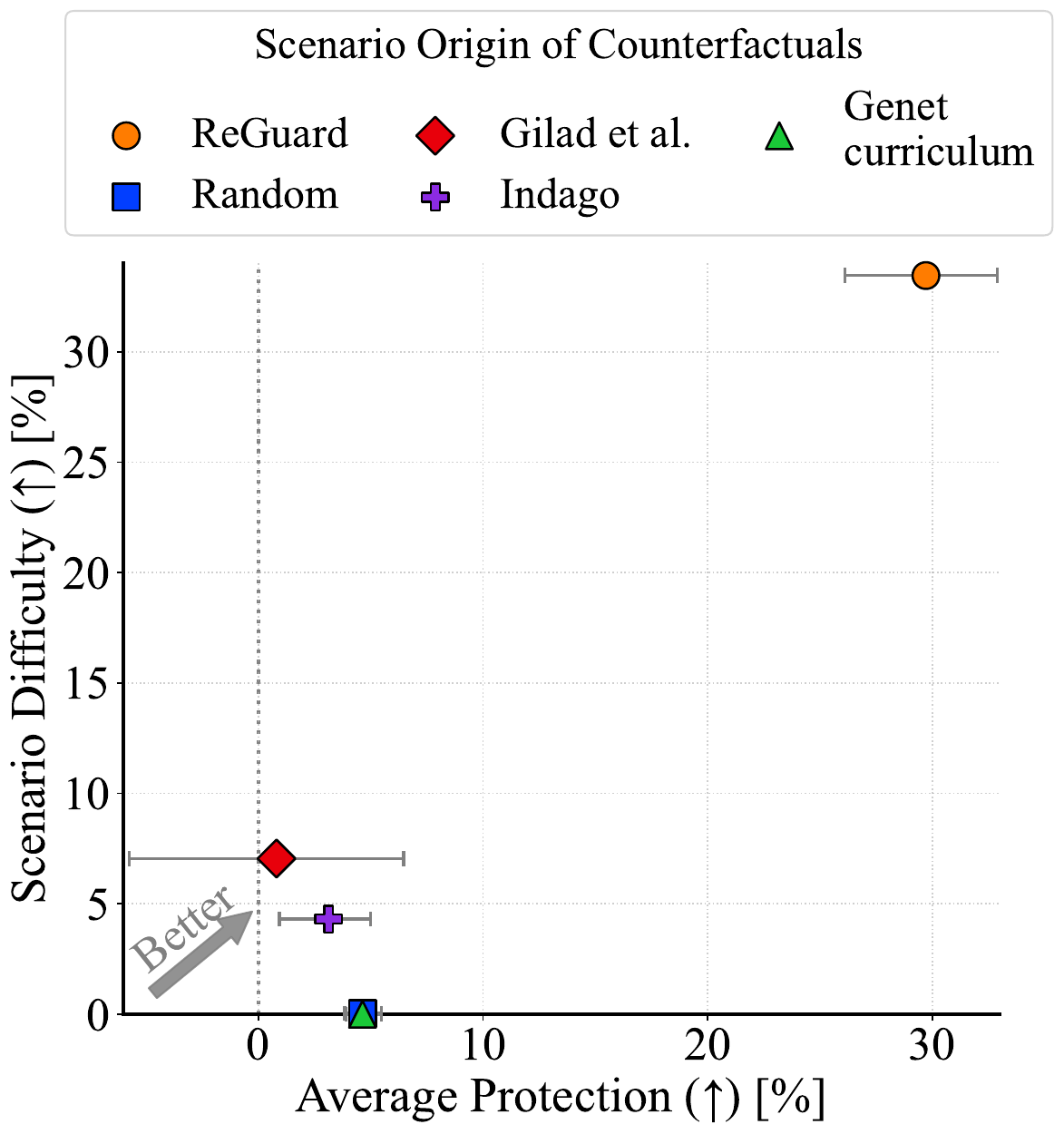}
        \caption{Park.}
        \label{fig:park_ablation}
    \end{subfigure}
    \caption{Harder counterfactual sources yield stronger protection from \sys.
    Search quality therefore matters directly: better discovery produces better protection.}
    \label{fig:ablation}
\end{figure*}

\begin{figure*}[t]
    \centering
    \begin{subfigure}[t]{0.33\linewidth}
        \centering
        \includegraphics[width=\linewidth]{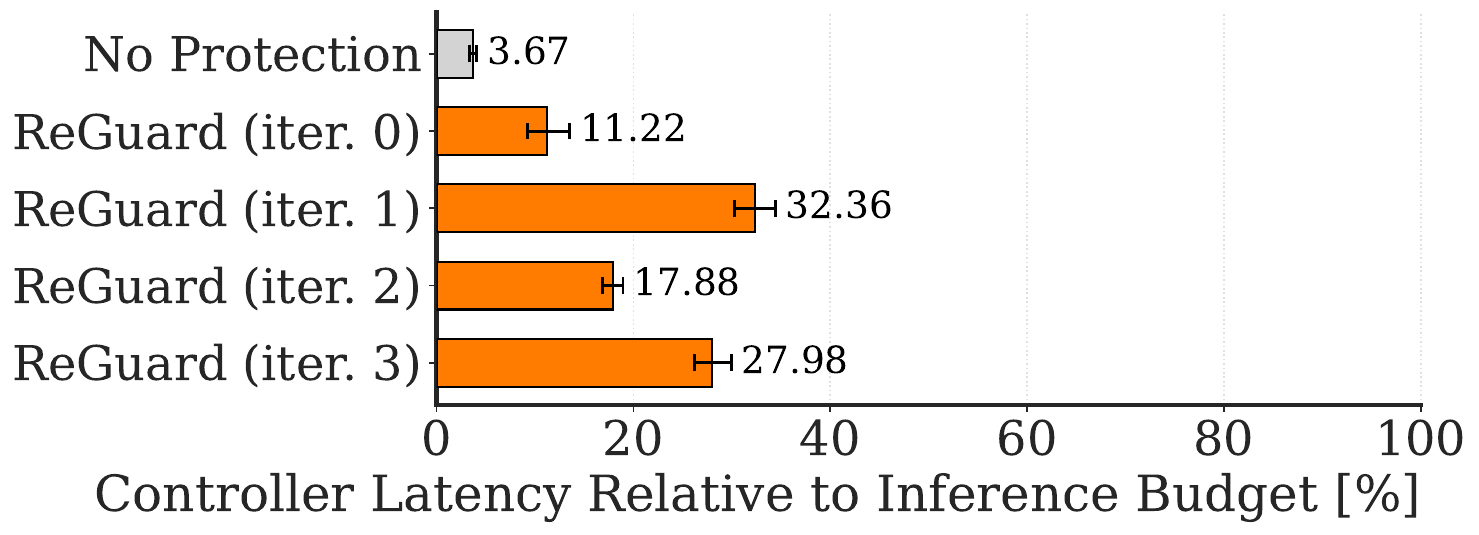}
        \caption{Pensieve.}
        \label{fig:pensieve_latency}
    \end{subfigure}
    \hfill
    \begin{subfigure}[t]{0.33\linewidth}
        \centering
        \includegraphics[width=\linewidth]{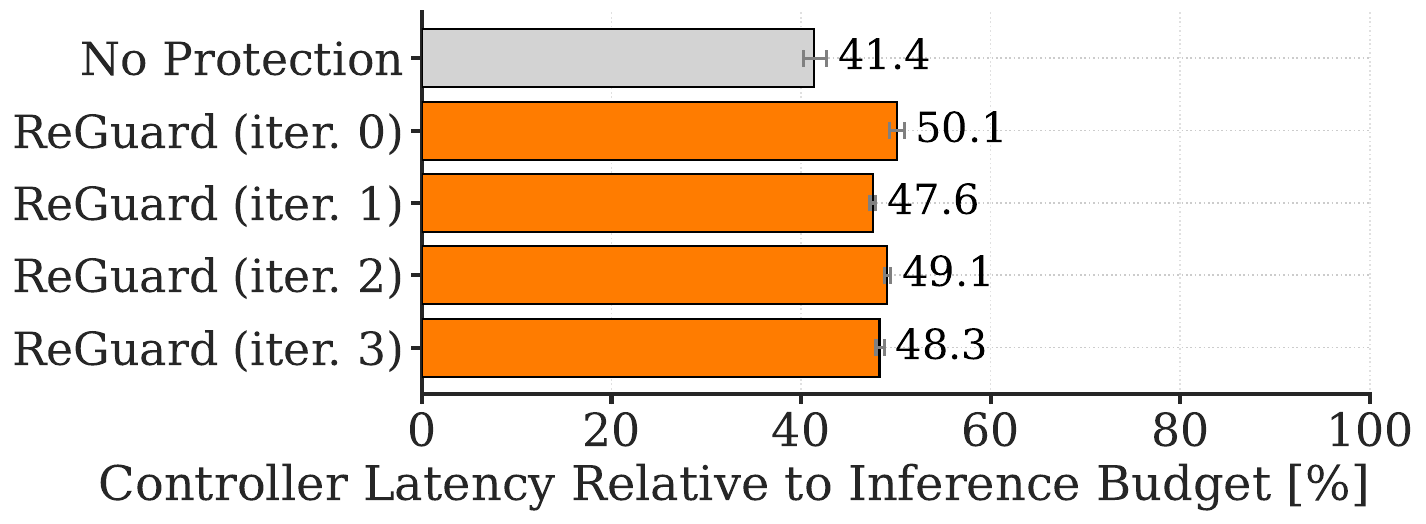}
        \caption{Sage.}
        \label{fig:sage_latency}
    \end{subfigure}
    \hfill
    \begin{subfigure}[t]{0.33\linewidth}
        \centering
        \includegraphics[width=\linewidth]{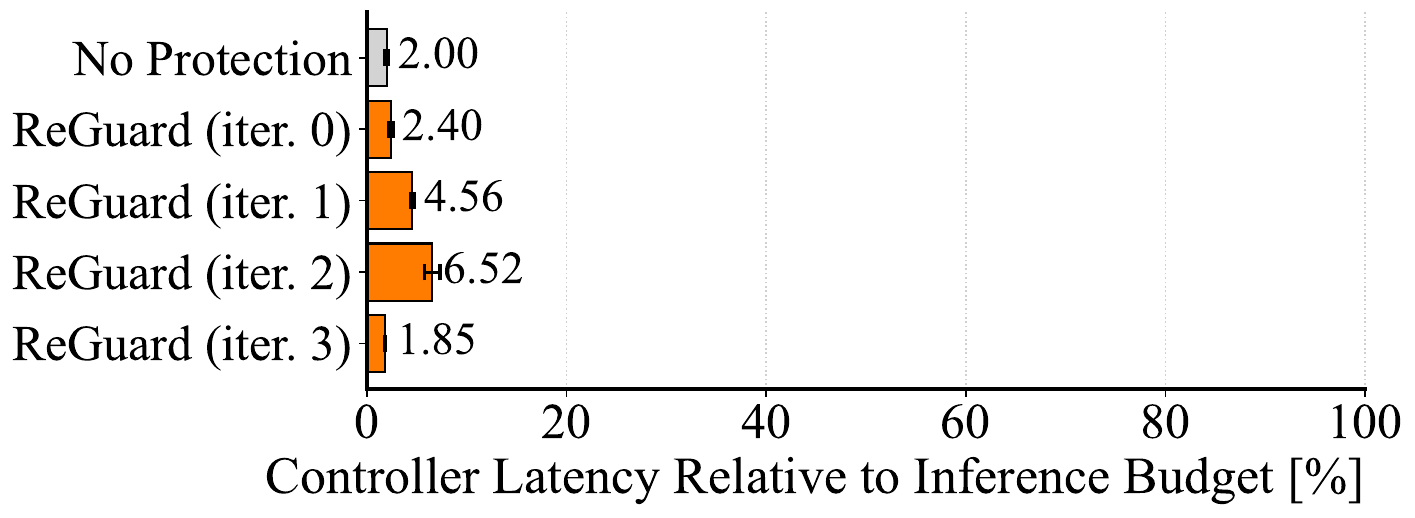}
        \caption{Park.}
        \label{fig:park_latency}
    \end{subfigure}
    \caption{\sys remains well within the online decision budget in all three systems across all refinement iterations.
    Later iterations strengthen protection, but they do not introduce systematic latency growth.}
    \label{fig:controller_latency_plots}
\end{figure*}

\begin{figure*}[t]
    \centering
    \begin{subfigure}[t]{0.33\linewidth}
        \centering
        \includegraphics[width=\linewidth]{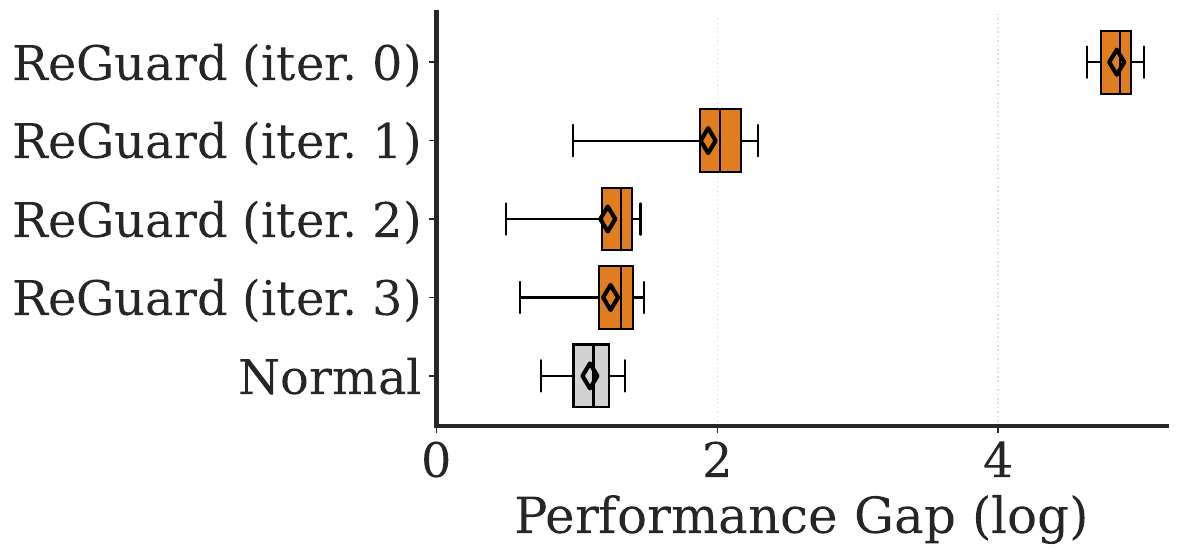}
        \caption{Pensieve.}
        \label{fig:pensieve_iters}
    \end{subfigure}
    \hfill
    \begin{subfigure}[t]{0.33\linewidth}
        \centering
        \includegraphics[width=\linewidth]{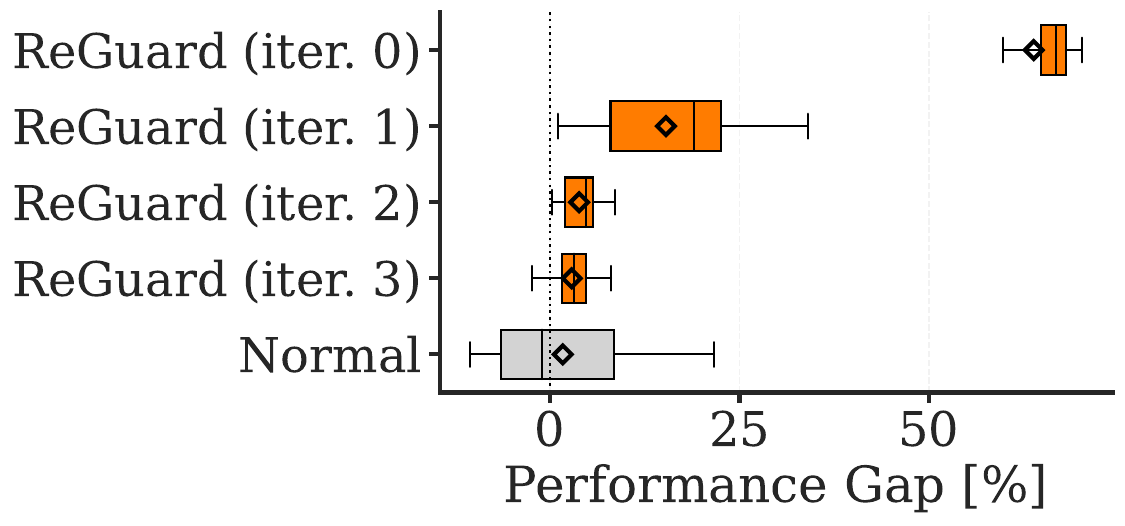}
        \caption{Sage.}
        \label{fig:sage_iters}
    \end{subfigure}
    \hfill
    \begin{subfigure}[t]{0.33\linewidth}
        \centering
        \includegraphics[width=\linewidth]{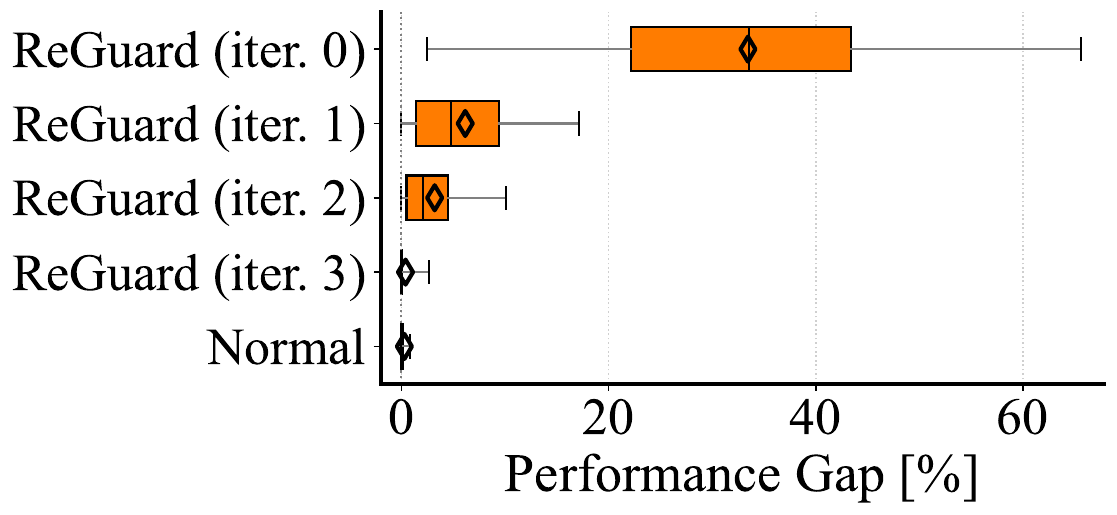}
        \caption{Park.}
        \label{fig:park_iters}
    \end{subfigure}
    \caption{A few search-and-protect iterations are enough to drive the discovered gap close to the Normal regime.
    Most of the improvement arrives in the first two rounds, after which the remaining challenging cases are much less severe.}
    \label{fig:iterations}
\end{figure*}

We aim to answer the following evaluation questions through experiments:
\begin{enumerate}[nosep, label=E\arabic*., ref=E\arabic*,]
    \item\label{e1} Can \sys find network scenarios where a given RL controller incurs a larger performance gap than competing baselines expose?
    \item\label{e2} Can \sys effectively protect a given RL controller without retraining while preserving its nominal performance?
    \item\label{e3} How does the hardness of the source counterfactual affect the effectiveness of \sys's protection?
    \item\label{e4} How much overhead does \sys impose on the RL controller during inference?
    \item\label{e5} Can \sys iteratively reduce the performance gap of the most challenging network scenarios to the level of normal scenarios?
\end{enumerate}

\subsection{Setup}

\mypar{Network Traces} 
We leverage \emph{diverse network traces}\footnote{These traces are referred to as ``Normal'' in the following experiments.} to ensure that the RL controllers experience a wide range of network scenarios during training: 
\begin{itemize}[leftmargin=*, itemsep=0pt, topsep=2pt]
    \item Sage~\cite{yen2023sage} is trained and evaluated on intra-continental live Internet paths using servers across 16 U.S. cities, intercontinental live Internet paths using 13 servers outside the U.S., and highly variable cellular workloads from 23 cellular traces gathered in NYC. We use these original traces and follow the same train-test split~\cite{abbasloo2020orca,yen2023sage}.
    \item Pensieve~\cite{mao_neural_2017_pensieve} is trained and evaluated on 375 FCC broadband traces~\cite{fcctraces} and 425 Norway cellular traces~\cite{norwaytraces}. We follow the train-test split specified in the Genet paper~\cite{xia2022genet}.
    \item Park~\cite{mao2019park} is trained and evaluated on 200 traces of the 5 workload types used in prior work~\cite{xia2022genet}, namely, ``Default'', ``Original'', ``RL1'', ``RL2'', and ``RL3.'' These scheduling workloads differ substantially in service rate, job size, job interval, number of jobs, and queue-shuffled probability. We use half of the traces for training and the other half for testing.
\end{itemize}

\mypar{Baselines for discovering challenging network scenarios}
\begin{itemize}[leftmargin=*, itemsep=0pt, topsep=2pt]
    \item \textbf{Indago}~\cite{biagiola2024indago} learns a surrogate failure predictor from the RL agent’s training episodes, then uses that predictor’s output as a fitness function and its saliency gradients as guidance to search for environment scenarios that are likely to make the RL agent fail.
    \item \textbf{Genet}~\cite{xia2022genet} is designed to help RL controllers learn to deal with challenging network scenarios by generating a challenging learning curriculum that maximizes the raw gap between the RL controller and a single rule-based baseline policy, aka ``gap-to-baseline.'' For these baselines used by Genet, we use BBR~\cite{cardwell2017bbr} for CCA, RobustMPC~\cite{yin2015robustmpc} for ABR, and least-load-first for LB.
    \item \textbf{Gilad et al.}~\cite{gilad_robustifying_2019} directly searches for challenging network conditions that degrade the controller's performance.
    \item \textbf{Random} explores the space of possible network scenarios randomly by uniformly fuzzing the entire domains of network scenario variables (Table~\ref{tab:maxregret_impl}).
\end{itemize}

\mypar{Protection baselines}
To protect the RL controller against challenging network scenarios, we compare \sys\footnote{The logic rules are always derived from training traces, never from held-out test traces.} against the following retraining-based approaches:
\begin{itemize}[leftmargin=*, itemsep=0pt, topsep=2pt]
    \item \textbf{Fine-tuning}: The most direct way to improve an RL policy on a given workload family is to continue training it on that workload.
    We consider two fine-tuning settings:
    \textbf{(individual)} We fine-tune the RL controller on traces from discovered scenarios in one specific family, and evaluate the resulting policy on held-out test traces of the \emph{same} kind (\eg fine-tuning on Indago's traces and then testing on Indago's held-out traces).
    \textbf{(all)} We fine-tune the RL controller on traces from \emph{all} scenario families (namely, \sys, Indago, Genet, Gilad et al., and Random) and evaluate the resulting policy separately on held-out traces from each individual family.
    \item \textbf{Curriculum learning}: We use Genet~\cite{xia2022genet} to iteratively generate increasingly challenging network scenarios and continue training the RL policy on the discovered traces, thereby exposing it to a progressively more challenging curriculum~\cite{akkaya2019solving,graves2017automated}. For all use cases, we run Genet for 20 rounds, doubling the 10 rounds used in the original paper.
\end{itemize}
Note that, since Sage~\cite{yen2023sage} is an offline RL method, to fine-tune it we follow the same procedure used by Mowgli~\cite{agarwal2025mowgli}: we first collect trajectories of the RL policy on the discovered challenging network scenarios and then continue training the model offline on those trajectories.
Testbed setup is described in Appendix~\ref{apdx:testbed}.

\subsection{Results}


\begin{graytxtframe}
\textbf{Finding 1:} \sys finds the largest performance gaps in all three use cases, certifying severe performance gaps of RL controllers.
\end{graytxtframe}

Results for \ref{e1} are clear.
Fig.~\ref{fig:gap_plots} shows that \sys finds the most challenging scenarios in all three applications, and the separation from prior baselines is large rather than marginal.
For Pensieve, we report the log performance gap because the underlying QoE gap can explode under long rebuffering.
For Sage and Park, we report the fraction of achievable performance that the controller fails to realize.
Under either metric, \sys remains clearly above every baseline.
The strongest separations are about 7\x larger than the next-best baseline in Pensieve, 1.6\x larger in Sage, and 6.2\x larger in Park.
This answers \ref{e1}: \sys is the only method that consistently exposes large avoidable underperformance instead of slightly harder tests.
\smallskip
\begin{graytxtframe}
\textbf{Finding 2:} \sys, when derived from the largest-gap network scenarios it finds, provides the strongest protection among the compared methods in all three use cases, preserves nominal performance, and remains effective beyond its source scenario family.
\end{graytxtframe}

On \ref{e2}, Fig.~\ref{fig:heatmaps} shows that \sys, when derived from its own scenarios, gives the strongest overall protection without retraining.
We report normalized performance gap, which asks how much avoidable underperformance remains after accounting for the scale of each scenario, so lower is better.
On the most challenging family for each application, \sys removes about 80\% of the gap.
The same protection also transfers to other scenario families, which shows that \sys is capturing recurring risky states rather than memorizing one generator.
The Normal column makes the practical tradeoff clear.
\sys preserves or improves nominal performance, whereas fine-tuning and curriculum learning often degrade it.
Appendix~\ref{app:eval-details} reports the detailed reductions and per-family comparisons.

\smallskip
\begin{graytxtframe}
\textbf{Finding 3:} The more challenging the source counterfactual is, the more protection \sys offers.
\end{graytxtframe}

Regarding \ref{e3}, Fig.~\ref{fig:ablation} isolates why the previous result holds.
We report average protection percentage, \ie how much of the original gap \sys removes on average, so higher is better.
Across all three applications, \sys is consistently stronger when it is derived from more challenging counterfactuals.
This means the search stage is not just producing examples.
It is producing the high-gap states that make the rule learner effective.

\smallskip
\begin{graytxtframe}
\textbf{Finding 4:} \sys always fits within the inference time budget.
\end{graytxtframe}

On \ref{e4}, Fig.~\ref{fig:controller_latency_plots} shows that \sys stays well within the online decision budget in every system.
We report decision-time ratio, the fraction of the available inference budget consumed by the controller plus protection, so any value below 100\% meets deadline.
Pensieve must decide before the current chunk finishes downloading, Sage has a fixed 10ms interval, and Park must place the current job before the next arrival.
Across iterations, \sys uses 11.22--32.36\% of Pensieve's budget, 47.55--50.11\% of Sage's budget, and 1.85--6.52\% of Park's budget.
Later iterations provide stronger protection, but the runtime overhead does not grow with that improvement.
This answers \ref{e4}: \sys is deployment-feasible.

\smallskip
\begin{graytxtframe}
\textbf{Finding 5:} \sys does not merely protect against a specific challenging scenario; it fundamentally shrinks the worst-case performance gap.
\end{graytxtframe}

Finally, Fig.~\ref{fig:iterations} answers \ref{e5}.
It uses the same performance-gap metrics as Fig.~\ref{fig:gap_plots}, so lower and closer to the Normal line is better.
Across all three applications, most of the improvement arrives in the first two search-and-protect rounds.
By iteration 3, the mean discovered gap falls by roughly 75\% in Pensieve, 94\% in Sage, and 93\% in Park, and the remaining challenging cases are close to the Normal regime.
Detailed trajectories appear in Appendix~\ref{app:eval-details}.

\section{Analyzing Revealed Performance Failures}
\label{sec:analysis}

The challenging scenarios discovered by \sys do not create arbitrary worst cases.
In Park, they expose a specific structural weakness that also appears in \sys's logic rules: when small jobs dominate the workload, Park collapses onto slow servers even though fast servers remain available.
Due to space constraints, the main paper keeps only the core Park mechanism here.
Detailed supporting statistics for Park appear in Appendix~\ref{app:analysis-park-details}, while the corresponding Pensieve and Sage analyses appear in Appendix~\ref{app:analysis-pensieve} and Appendix~\ref{app:analysis-sage}.
Below, $\mathrm{p}k$ denotes the corresponding $k$th-percentile predicate computed from normal scenarios (\S\ref{sec:shield}).

\begin{figure}[t]
    \centering
    \begin{adjustbox}{width=1\linewidth,center=0pt}
    \includegraphics[width=\linewidth]{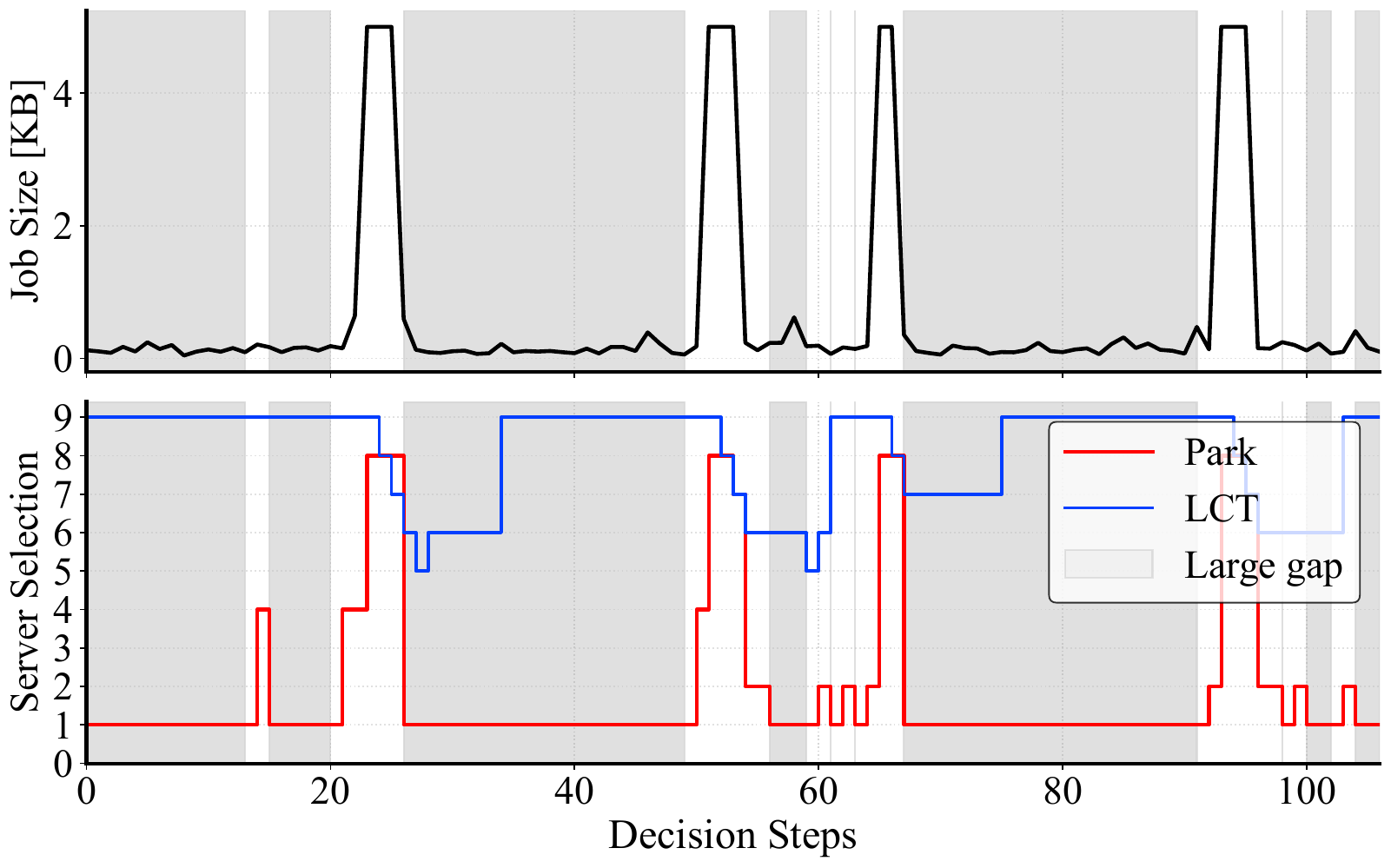}
    \end{adjustbox}
\caption{Small-job-heavy scenarios expose Park's bias toward slow servers.
Park keeps routing tiny jobs to the slow tier even while fast servers remain idle, which creates prolonged queueing and a large gap to Least Completion Time (LCT).}
\label{fig:park_failures1}
\end{figure}

\begin{figure}[t]
    \centering
    \begin{adjustbox}{width=1.1\linewidth,center=0pt}
    \includegraphics[width=\linewidth]{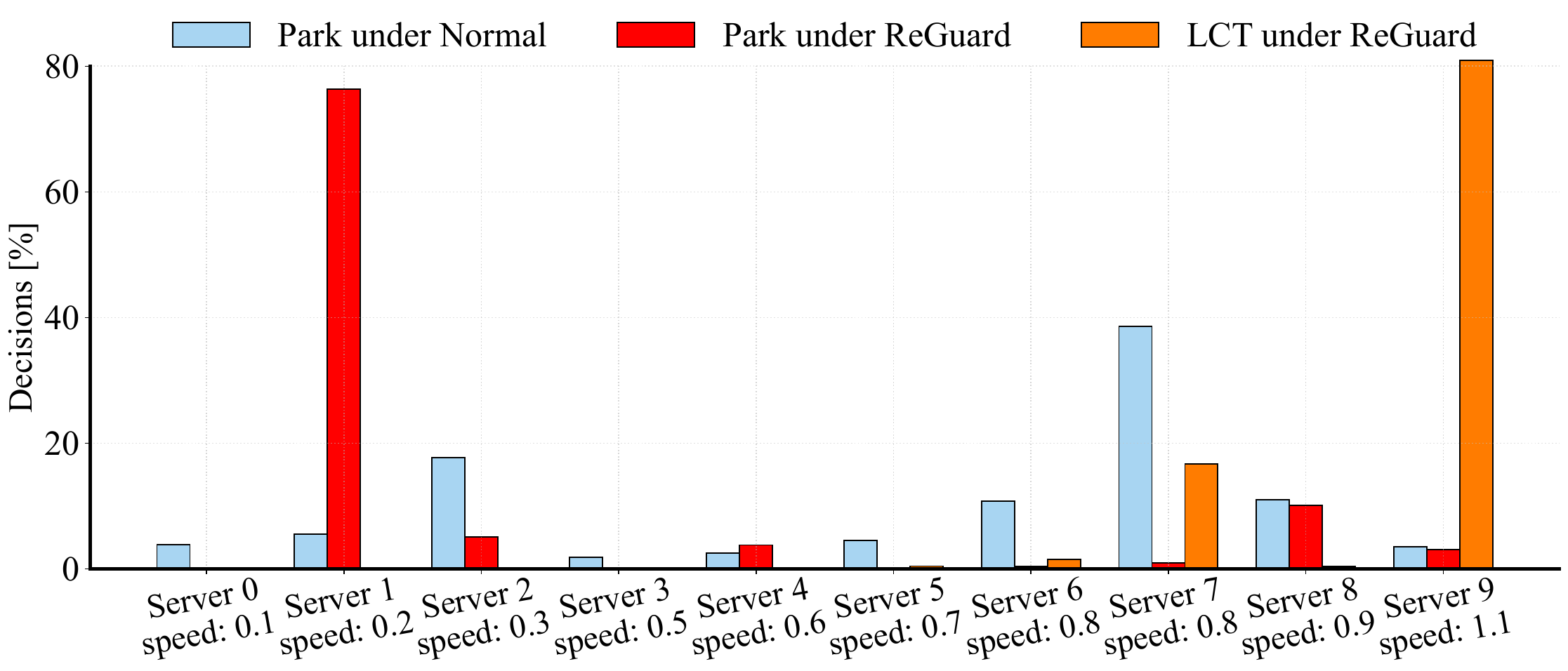}
    \end{adjustbox}
\caption{Network scenarios found by \sys turn Park's mild preference under nominal conditions into a near-collapse onto slow Server 1.
LCT still routes most jobs to the fast tier, which shows that the issue is Park's dispatch policy rather than an inherently bad workload.}
\label{fig:park_failures2}
\end{figure}

Park's failure is a structural dispatch bias rather than simple overload.
Fig.~\ref{fig:park_failures1} shows a slice dominated by small jobs that the fast tier can easily absorb, yet Park still collapses onto the slow tier.
For jobs of at most 0.2KB, Park chooses a slow server 97.3\% of the time, whereas Least Completion Time routes 81.1\% of those same jobs to the fast tier.
Even inside the large-gap region, fast servers remain idle throughout.
The problem is therefore not a lack of capacity.
It is that Park systematically prefers the wrong part of the cluster.

Fig.~\ref{fig:park_failures2} shows that this is not a one-off slice.
Under the challenging scenarios discovered by \sys, Park places 76.3\% of all decisions on server 1, up from 5.6\% on normal workloads.
Least Completion Time under the same scenarios still routes most decisions to the fast tier.
At least one fast server remains idle on 98.0\% of steps, so the issue is not universal overload.
The challenging scenario amplifies a policy bias that sends small jobs to the wrong part of the cluster.

\sys makes this failure explicit by separating risk detection from recovery.
Here, $\mathrm{GapToMin}_i$ denotes server $i$'s load proxy minus the current minimum load proxy, and $\mathrm{PostLoad}_i$ denotes server $i$'s load proxy after hypothetically adding the incoming job.
One representative risk rule is
\[
\begin{aligned}
&(\mathrm{JobSize} \le \mathrm{p25}) \land (\mathrm{GapToMin}_1 > \mathrm{p25}) \\
&{}\land (\mathrm{PostLoad}_9 \le \mathrm{p25}) \land (\mathrm{StdLoad} > \mathrm{p25}) \implies \mathrm{Risky}=\texttt{true} .
\end{aligned}
\]
This rule says that a small job has arrived, server 1 already looks much busier than the currently least-loaded server, server 9 would still look light after taking the job, and the cluster is already imbalanced enough for that difference to matter.
That is exactly the failure state in Fig.~\ref{fig:park_failures1}: the next job is easy for the fast tier, but Park is still feeding the slow tier.

Once such a state is marked risky, the recovery rules reopen the fast servers explicitly.
A representative correction rule is
\[
\begin{aligned}
&(\mathrm{GapToMin}_1 > \mathrm{p25}) \land (\mathrm{GapToMin}_9 \le \mathrm{p25}) \\
&{}\land (\mathrm{MeanLoad} \le \mathrm{p25}) \land (\mathrm{TotalLoad} \le \mathrm{p25}) \implies \mathrm{Allow}_9=\texttt{true} .
\end{aligned}
\]
This rule says that when server 1 is far from the current minimum but server 9 remains near that minimum under low overall load, the fastest server must remain available to the policy.
The rule does not talk about a mysterious latent bias.
It states directly that Park should stop treating server 1 as attractive when server 9 is visibly better.
Together, these rules show that \sys is not merely labeling workloads as challenging.
It identifies the specific state in which Park's dispatch preference becomes harmful and reopens the fast tier only in that state.
Additional supporting statistics and correction rules appear in Appendix~\ref{app:analysis-park-details}.

\section{Conclusion}
\sys discovers worst-case failures in RL-based network controllers by formulating the search for challenging network scenarios as bilevel regret maximization.
It protects the controller at inference time using lightweight, interpretable logic rules derived from counterfactual analysis.
Across three controllers spanning distinct networking tasks, \sys exposes performance gaps up to 6$\times$ larger than prior methods and closes up to 85\% of those gaps within the first refinement iteration, without retraining and without sacrificing nominal performance.

\enlargethispage{1.5\baselineskip}

\clearpage





\bibliographystyle{plainnat}
\bibliography{reference}

\newpage
\appendix
\section{Proofs for Regret Guarantees}
\label{app:proofs}

This appendix gives the detailed assumptions, intermediate lemmas, and full proofs for the regret guarantees in Section~\ref{sec:guarantees_sys}.
Our goal is to make explicit what is guaranteed by the formulation, what approximation errors enter the final bound, and how those errors propagate through the argument.

\subsection{Setup and Notation}
\label{app:proof_setup}

We begin by restating the exact and approximate regret objectives used in the main text.
For every feasible network scenario $e \in \mathcal{E}$, define
\begin{equation}
\label{eq:app_regret_defs}
R(e) := J(\pi_e^\star;e) - J(\pi;e),
\qquad
\hat{R}(e) := J(\hat{\pi}_e^\star;e) - J(\pi;e).
\end{equation}
Here:
\begin{itemize}[leftmargin=*, itemsep=0pt, topsep=2pt]
    \item $R(e)$ is the \emph{exact} regret objective, where the comparator $\pi_e^\star$ is an exact optimizer within the reference policy class:
    \[
    \pi_e^\star \in \arg\max_{\pi' \in \Pi} J(\pi';e).
    \]
    \item $\hat{R}(e)$ is the \emph{approximate} regret objective, where the comparator $\hat{\pi}_e^\star \in \Pi$ is the practical reference policy used by the implementation.
\end{itemize}

The exact bilevel target in the main text is
\begin{equation}
\label{eq:app_outer_true}
e^\star \in \arg\max_{e \in \mathcal{E}} R(e),
\end{equation}
while the implemented procedure approximately solves
\begin{equation}
\label{eq:app_outer_hat}
\hat{e} \in \arg\max_{e \in \mathcal{E}} \hat{R}(e).
\end{equation}

The proofs below establish three facts.
First, the outer solver returns a near-optimal scenario for the approximate objective.
Second, the approximate objective is a pointwise lower bound on the exact objective.
Third, combining these two facts yields the final exact-regret guarantee in Theorem~\ref{thm:exact}.

\subsection{Assumptions}
\label{app:assumptions}

We restate the assumptions from the main text for completeness.

\begin{enumerate}[leftmargin=*, label=A\arabic*:, ref=A\arabic*, itemsep=0pt, topsep=2pt]
    \item \label{as:nonempty}
    The feasible set $\mathcal{E}$ is non-empty, and both Eqns.~\eqref{eq:outer_true} and~\eqref{eq:outer_hat} admit optimizers.

    \item \label{as:finite}
    For every $e \in \mathcal{E}$, the rewards $J(\pi_e^\star;e)$, $J(\hat{\pi}_e^\star;e)$, and $J(\pi;e)$ are finite.

    \item \label{as:solver}
    The outer RL solver returns an $\varepsilon$-optimal solution $\tilde e \in \mathcal{E}$ to Eqn.~\eqref{eq:outer_hat}, namely
    \begin{equation}
    \label{eq:ass_solver}
    \max_{e \in \mathcal{E}} \hat{R}(e) - \hat{R}(\tilde e) \le \varepsilon.
    \end{equation}

    \item \label{as:oracle}
    The approximate reference satisfies
    \begin{equation}
    \label{eq:ass_oracle}
    J(\pi_e^\star;e) - J(\hat{\pi}_e^\star;e) \le \delta(e)
    \end{equation}
    for every $e \in \mathcal{E}$, where $\delta(e) \ge 0$.
\end{enumerate}

These assumptions are mild in the settings studied in the paper.
\ref{as:nonempty} says that the operator-defined feasible family of network scenarios is nontrivial and that both optimization problems are well posed.
\ref{as:finite} excludes degenerate cases in which the trajectory reward is undefined or unbounded.
\ref{as:solver} captures the quality of the outer solver.
It does not require exact optimization, only that the scenario returned by the solver be within $\varepsilon$ of the best achievable approximate-regret score.
\ref{as:oracle} captures the quality of the inner reference policy.
It says that the approximate reference does not fall more than $\delta(e)$ below the exact best reference in scenario $e$.

The final theorem separates these two error sources cleanly.
The term $\varepsilon$ accounts for suboptimality in the outer search over scenarios.
The term $\delta(e^\star)$ accounts for the error of the practical reference at an exact worst-case scenario.
This separation is useful because the two terms arise from different components of the system and can be improved independently.

\subsection{Auxiliary Observations}
\label{app:aux_obs}

We first record two elementary but useful observations that will be used repeatedly in the proofs.

\begin{lem}
\label{lem:exact_ge_approx_ref}
For every feasible scenario $e \in \mathcal{E}$,
\begin{equation}
\label{eq:exact_ge_approx_ref}
J(\pi_e^\star;e) \ge J(\hat{\pi}_e^\star;e).
\end{equation}
\end{lem}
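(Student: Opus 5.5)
The plan is to argue directly from the definition of the exact comparator and the membership of the practical reference in the reference class. Fix an arbitrary feasible scenario $e \in \mathcal{E}$. By definition, $\pi_e^\star \in \arg\max_{\pi' \in \Pi} J(\pi';e)$. This means
\[
J(\pi_e^\star;e) \ge J(\pi';e) \quad \text{for every } \pi' \in \Pi .
\]
The existence of this maximizer is part of the well-posedness in \ref{as:nonempty}. The finiteness in \ref{as:finite} guarantees that the comparison is between real numbers rather than extended values.

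The key step is to instantiate this inequality at $\pi' = \hat{\pi}_e^\star$. This is legitimate because the setup (Section~\ref{app:proof_setup}) explicitly requires the practical reference to satisfy $\hat{\pi}_e^\star \in \Pi$, and Eqn.~\eqref{eq:impl_reference} constructs it as a member of the implemented class $\Pi$. Substituting gives $J(\pi_e^\star;e) \ge J(\hat{\pi}_e^\star;e)$ immediately. Since $e$ was arbitrary, Eqn.~\eqref{eq:exact_ge_approx_ref} holds for all $e \in \mathcal{E}$.

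The only point needing care is the membership $\hat{\pi}_e^\star \in \Pi$, with $\Pi$ being the same class over which $\pi_e^\star$ is optimal. If the practical reference were allowed to lie outside $\Pi$ (for instance, a rolling oracle not in the portfolio), the inequality could fail. So I would state this membership explicitly as the hypothesis being used. I would also note that the same class-membership convention underlies the nonnegativity $\delta(e) \ge 0$ in \ref{as:oracle}. As a corollary, subtracting $J(\pi;e)$ from both sides yields $\hat{R}(e) \le R(e)$. This is the left inequality of Eqn.~\eqref{eq:pointwise_gap}, which is presumably how the lemma will be used later.
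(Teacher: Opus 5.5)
Your proof is correct and is essentially the paper's own argument: fix $e$, use that $\pi_e^\star$ maximizes $J(\cdot;e)$ over $\Pi$, and instantiate that optimality at $\hat{\pi}_e^\star \in \Pi$. One minor quibble: \ref{as:nonempty} only guarantees optimizers for the outer problems, so the existence of $\pi_e^\star$ is really presupposed by its definition $\pi_e^\star \in \arg\max_{\pi' \in \Pi} J(\pi';e)$, not supplied by that assumption.
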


\begin{proof}
Fix any $e \in \mathcal{E}$.
By definition, $\pi_e^\star$ is an optimizer of $J(\pi';e)$ over the reference class $\Pi$.
Since $\hat{\pi}_e^\star \in \Pi$, the optimality of $\pi_e^\star$ implies
\[
J(\pi_e^\star;e) \ge J(\hat{\pi}_e^\star;e).
\]
This proves the claim.
\end{proof}

\begin{lem}
\label{lem:regret_diff_identity}
For every feasible scenario $e \in \mathcal{E}$,
\begin{equation}
\label{eq:regret_diff_identity}
R(e) - \hat{R}(e)
=
J(\pi_e^\star;e) - J(\hat{\pi}_e^\star;e).
\end{equation}
\end{lem}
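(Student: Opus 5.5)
The plan is a direct algebraic verification from the definitions in Eqn.~\eqref{eq:app_regret_defs}; no earlier lemma or assumption is needed beyond finiteness. Fix an arbitrary feasible scenario $e \in \mathcal{E}$. I would first invoke \ref{as:finite}, which guarantees that the three quantities $J(\pi_e^\star;e)$, $J(\hat{\pi}_e^\star;e)$, and $J(\pi;e)$ are finite real numbers. This matters because the identity involves subtracting $J(\pi;e)$ twice, and the cancellation below is only legitimate if we are not in an $\infty - \infty$ situation.

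Next, I would substitute the two definitions directly and subtract:
\[
R(e) - \hat{R}(e)
= \bigl[J(\pi_e^\star;e) - J(\pi;e)\bigr] - \bigl[J(\hat{\pi}_e^\star;e) - J(\pi;e)\bigr].
\]
Then I would regroup the terms. The controller's reward $J(\pi;e)$ appears once with a positive sign and once with a negative sign, so it cancels. What remains is $J(\pi_e^\star;e) - J(\hat{\pi}_e^\star;e)$, which is exactly Eqn.~\eqref{eq:regret_diff_identity}. Since $e$ was arbitrary, the identity holds on all of $\mathcal{E}$.

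There is no substantive obstacle; the only point requiring care is the finiteness justification for the cancellation, handled by \ref{as:finite}. I would also remark that the identity holds for whichever selections $\pi_e^\star$ and $\hat{\pi}_e^\star$ are fixed in the definitions, so non-uniqueness of maximizers does not matter.

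Finally, I would note how the result is used downstream. Combined with Lemma~\ref{lem:exact_ge_approx_ref} and \ref{as:oracle}, it immediately gives the pointwise sandwich $0 \le R(e) - \hat{R}(e) \le \delta(e)$ of Eqn.~\eqref{eq:pointwise_gap}.
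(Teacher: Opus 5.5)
Your proposal is correct and matches the paper's proof: substitute the definitions of $R(e)$ and $\hat{R}(e)$, and the common term $J(\pi;e)$ cancels. Your explicit appeal to \ref{as:finite} to rule out an $\infty-\infty$ cancellation is a small piece of rigor that the paper leaves implicit.
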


\begin{proof}
By the definitions in Eqn.~\eqref{eq:app_regret_defs},
\[
R(e) = J(\pi_e^\star;e) - J(\pi;e),
\qquad
\hat{R}(e) = J(\hat{\pi}_e^\star;e) - J(\pi;e).
\]
Subtracting the second identity from the first cancels the common term $J(\pi;e)$ and yields
\[
R(e) - \hat{R}(e)
=
J(\pi_e^\star;e) - J(\hat{\pi}_e^\star;e).
\]
\end{proof}

The next observation makes explicit that the approximate objective is always conservative.

\begin{lem}
\label{lem:approx_is_conservative}
For every feasible scenario $e \in \mathcal{E}$,
\begin{equation}
\label{eq:approx_is_conservative}
\hat{R}(e) \le R(e).
\end{equation}
\end{lem}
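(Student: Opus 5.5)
The plan is to obtain Lemma~\ref{lem:approx_is_conservative} directly from the two auxiliary observations just established, with no further structure required. Fix an arbitrary feasible scenario $e \in \mathcal{E}$. By Lemma~\ref{lem:regret_diff_identity}, the common term $J(\pi;e)$ cancels, so the difference of the two regret objectives reduces to the difference of the two reference values:
\[
R(e) - \hat{R}(e) = J(\pi_e^\star;e) - J(\hat{\pi}_e^\star;e).
\]
This subtraction is legitimate because \ref{as:finite} guarantees that all three trajectory rewards are finite, so no $\infty - \infty$ ambiguity arises.

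Next, I would invoke Lemma~\ref{lem:exact_ge_approx_ref}. It uses only that $\hat{\pi}_e^\star \in \Pi$ and that $\pi_e^\star$ maximizes $J(\cdot;e)$ over $\Pi$, and it gives $J(\pi_e^\star;e) - J(\hat{\pi}_e^\star;e) \ge 0$. Substituting this into the identity yields $R(e) - \hat{R}(e) \ge 0$, that is, $\hat{R}(e) \le R(e)$. Since $e$ was arbitrary, the inequality holds on all of $\mathcal{E}$.

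There is no genuine obstacle here. The only point deserving care is to make explicit which hypotheses are used. Membership $\hat{\pi}_e^\star \in \Pi$ is what makes the reference comparison valid: if the practical reference could lie outside $\Pi$, the inequality could fail. Finiteness from \ref{as:finite} is what justifies the cancellation.

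I would close with a remark that, combined with \ref{as:oracle}, the same identity also gives the upper half of Eqn.~\eqref{eq:pointwise_gap}, namely $R(e) - \hat{R}(e) \le \delta(e)$. This is the form in which the lemma will feed into the proof of Theorem~\ref{thm:exact}.
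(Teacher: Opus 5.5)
Your proposal is correct and is essentially the paper's argument. The paper also invokes Lemma~\ref{lem:exact_ge_approx_ref} and subtracts the common term $J(\pi;e)$ from both sides, which is the same cancellation you route through Lemma~\ref{lem:regret_diff_identity}; your appeal to \ref{as:finite} to justify that subtraction is a slightly more careful but harmless addition.
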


\begin{proof}
Fix any $e \in \mathcal{E}$.
By Lemma~\ref{lem:exact_ge_approx_ref},
\[
J(\pi_e^\star;e) \ge J(\hat{\pi}_e^\star;e).
\]
Subtracting the common term $J(\pi;e)$ from both sides preserves the inequality, giving
\[
J(\pi_e^\star;e)-J(\pi;e)
\ge
J(\hat{\pi}_e^\star;e)-J(\pi;e).
\]
By Eqn.~\eqref{eq:app_regret_defs}, this is exactly
\[
R(e) \ge \hat{R}(e).
\]
\end{proof}

Lemma~\ref{lem:approx_is_conservative} is conceptually important.
It says that replacing the exact reference policy with the practical reference can only decrease the regret score.
In other words, the approximate regret objective used by the implementation is a pointwise lower bound on the ideal objective of interest.

\subsection{Outer-search Guarantee}
\label{app:proof-outer}

We now formalize the guarantee provided by the outer RL solver on the approximate objective.

\begin{lem}
\label{thm:outer}
Under \ref{as:nonempty} and~\ref{as:solver}, the scenario $\tilde e$ returned by the outer solver satisfies
\begin{equation}
\label{eq:guar1}
\max_{e \in \mathcal{E}} \hat{R}(e) - \hat{R}(\tilde e) \le \varepsilon.
\end{equation}
Equivalently,
\begin{equation}
\label{eq:app_guar1_alt}
\hat{R}(\tilde e) \ge \max_{e \in \mathcal{E}} \hat{R}(e) - \varepsilon.
\end{equation}
\end{lem}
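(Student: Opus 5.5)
The statement to prove is Lemma~\ref{thm:outer}. It is essentially a restatement of assumption~\ref{as:solver}, so the plan is to unpack that assumption and check that every quantity in it is well defined.

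\textbf{Step 1: well-posedness.} I would first make sure each term is a real number.
\begin{itemize}
\item By~\ref{as:nonempty}, $\mathcal{E}$ is non-empty and Eqn.~\eqref{eq:outer_hat} admits an optimizer $\hat e$. Hence $\max_{e\in\mathcal{E}}\hat{R}(e)$ is attained and equals $\hat{R}(\hat e)$, rather than being a supremum that might be $+\infty$.
\item By~\ref{as:finite}, $J(\hat{\pi}_e^\star;e)$ and $J(\pi;e)$ are finite for every feasible $e$. So $\hat{R}(e)$ is finite for every $e\in\mathcal{E}$, in particular for $\hat e$ and for $\tilde e$.
\item This use of~\ref{as:finite} is implicit in the lemma's hypotheses, since finiteness is needed for the algebra below.
\item $\tilde e\in\mathcal{E}$ holds because~\ref{as:solver} stipulates that the solver returns a feasible scenario.
\end{itemize}

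\textbf{Step 2: the main inequality.} Once these are finite reals, Eqn.~\eqref{eq:guar1} is exactly the inequality~\eqref{eq:ass_solver} of~\ref{as:solver}, read with $\max_{e\in\mathcal{E}}\hat{R}(e)=\hat{R}(\hat e)$.

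\textbf{Step 3: the equivalent form.} To obtain Eqn.~\eqref{eq:app_guar1_alt}, I would add $\hat{R}(\tilde e)-\varepsilon$ to both sides of~\eqref{eq:guar1}. This rearrangement is valid precisely because both $\max_{e}\hat{R}(e)$ and $\hat{R}(\tilde e)$ are finite, so no $\infty-\infty$ ambiguity arises. The same rearrangement run in reverse gives the converse direction, which establishes the stated equivalence.

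\textbf{Optional remark.} I would note that $\varepsilon\ge 0$ automatically, since $\hat{R}(\tilde e)\le\max_{e\in\mathcal{E}}\hat{R}(e)$ for the feasible point $\tilde e$.

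\textbf{Main obstacle.} There is no real mathematical obstacle here; the only care needed is the finiteness and attainment bookkeeping in Step 1. The substantive content, namely that an RL solver actually achieves a small $\varepsilon$, is assumed rather than proved.
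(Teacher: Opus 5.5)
Your proposal is correct and follows the paper's proof: both take the optimizer guaranteed by \ref{as:nonempty}, observe that Eqn.~\eqref{eq:guar1} is literally the inequality in \ref{as:solver}, and rearrange to get Eqn.~\eqref{eq:app_guar1_alt}. The only difference is that you explicitly invoke \ref{as:finite} to rule out infinite values before rearranging; the paper leaves this implicit, since its lemma lists only \ref{as:nonempty} and \ref{as:solver}.
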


\begin{proof}
By \ref{as:nonempty}, the approximate outer problem in Eqn.~\eqref{eq:outer_hat} admits an optimizer.
Let $\hat e^\star \in \mathcal{E}$ be such that
\begin{equation}
\label{eq:hat_opt_attained}
\hat{R}(\hat e^\star) = \max_{e \in \mathcal{E}} \hat{R}(e).
\end{equation}
By \ref{as:solver}, the outer solver returns a scenario $\tilde e \in \mathcal{E}$ whose approximate objective value is within $\varepsilon$ of the optimum:
\begin{equation}
\label{eq:outer_solver_eps_again}
\max_{e \in \mathcal{E}} \hat{R}(e) - \hat{R}(\tilde e) \le \varepsilon.
\end{equation}
This is exactly Eqn.~\eqref{eq:guar1}.
Rearranging Eqn.~\eqref{eq:guar1} gives
\[
\hat{R}(\tilde e)
\ge
\max_{e \in \mathcal{E}} \hat{R}(e)-\varepsilon,
\]
which is Eqn.~\eqref{eq:app_guar1_alt}.
\end{proof}

Although straightforward, Lemma~\ref{thm:outer} plays an important role.
It tells us that the outer solver returns a scenario whose \emph{approximate} regret score is nearly the largest possible over the feasible family.
This is the first half of the final argument.
The second half is to connect approximate regret to exact regret.

\subsection{Pointwise Approximation Guarantee}
\label{app:proof-pointwise}

We next show that the approximate regret is a pointwise lower bound on exact regret and that the gap is controlled exactly by the inner-reference error.

\begin{lem}
\label{thm:pointwise}
Under \ref{as:finite} and~\ref{as:oracle}, every $e \in \mathcal{E}$ satisfies
\begin{equation}
\label{eq:app_pointwise_gap}
0 \le R(e) - \hat{R}(e) \le \delta(e).
\end{equation}
\end{lem}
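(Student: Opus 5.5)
The plan is to prove the two inequalities in Eqn.~\eqref{eq:app_pointwise_gap} separately, reducing both to statements about the reference policies alone by means of the identity in Lemma~\ref{lem:regret_diff_identity}. Fix an arbitrary $e \in \mathcal{E}$. By \ref{as:finite}, the three quantities $J(\pi_e^\star;e)$, $J(\hat{\pi}_e^\star;e)$ and $J(\pi;e)$ are finite real numbers. This matters because it lets us subtract them and cancel the common term $J(\pi;e)$ without encountering undefined expressions such as $\infty-\infty$.

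The first step is to invoke Lemma~\ref{lem:regret_diff_identity}, which gives
\[
R(e)-\hat{R}(e) = J(\pi_e^\star;e)-J(\hat{\pi}_e^\star;e).
\]
From this point on, the claim concerns only the gap between the exact and the practical reference.

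The lower bound $0 \le R(e)-\hat{R}(e)$ follows directly. It is Lemma~\ref{lem:approx_is_conservative}, or equivalently Lemma~\ref{lem:exact_ge_approx_ref} combined with the identity above. The underlying reason is the optimality of $\pi_e^\star$ over $\Pi$ together with the membership $\hat{\pi}_e^\star \in \Pi$.

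The upper bound $R(e)-\hat{R}(e) \le \delta(e)$ is obtained by substituting the identity into Eqn.~\eqref{eq:ass_oracle} of \ref{as:oracle}, which bounds exactly the quantity $J(\pi_e^\star;e)-J(\hat{\pi}_e^\star;e)$ by $\delta(e)$. Since $e$ was arbitrary, the two-sided bound holds for every $e \in \mathcal{E}$.

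No step here is a real obstacle; the argument is bookkeeping. The only point needing care is that the nonnegativity direction relies on the definition of $\pi_e^\star$ as an argmax over $\Pi$ and on $\hat{\pi}_e^\star \in \Pi$, rather than on \ref{as:oracle}. The finiteness assumption is what makes the cancellation in Lemma~\ref{lem:regret_diff_identity} legitimate.
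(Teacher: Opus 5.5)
Your proposal is correct and follows the paper's proof essentially step for step. Both rewrite $R(e)-\hat{R}(e)$ via Lemma~\ref{lem:regret_diff_identity}, get nonnegativity from the optimality of $\pi_e^\star$ over $\Pi$ (Lemma~\ref{lem:exact_ge_approx_ref}), and read the upper bound directly off \ref{as:oracle}; your remark that \ref{as:finite} is what licenses cancelling $J(\pi;e)$ only makes explicit what the paper leaves implicit.
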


\begin{proof}
Fix any $e \in \mathcal{E}$.
We prove the lower and upper bounds separately.

\mypar{Lower bound}
By Lemma~\ref{lem:regret_diff_identity},
\[
R(e)-\hat{R}(e)
=
J(\pi_e^\star;e)-J(\hat{\pi}_e^\star;e).
\]
By Lemma~\ref{lem:exact_ge_approx_ref},
\[
J(\pi_e^\star;e)-J(\hat{\pi}_e^\star;e) \ge 0.
\]
Combining the two equations yields
\begin{equation}
\label{eq:pointwise_lower_detailed}
R(e)-\hat{R}(e) \ge 0.
\end{equation}

\mypar{Upper bound}
Again by Lemma~\ref{lem:regret_diff_identity},
\[
R(e)-\hat{R}(e)
=
J(\pi_e^\star;e)-J(\hat{\pi}_e^\star;e).
\]
\ref{as:oracle} states exactly that
\[
J(\pi_e^\star;e)-J(\hat{\pi}_e^\star;e) \le \delta(e).
\]
Therefore,
\begin{equation}
\label{eq:pointwise_upper_detailed}
R(e)-\hat{R}(e) \le \delta(e).
\end{equation}

Combining Eqns.~\eqref{eq:pointwise_lower_detailed} and~\eqref{eq:pointwise_upper_detailed} proves
\[
0 \le R(e)-\hat{R}(e) \le \delta(e).
\]
\end{proof}

Lemma~\ref{thm:pointwise} is the precise statement that the approximate objective is conservative.
At every scenario $e$, the exact regret is at least as large as the approximate regret.
Furthermore, the amount by which the approximate objective can underestimate the exact objective is no larger than the inner-reference error $\delta(e)$.
This pointwise control is what allows us to lift an approximate-optimality guarantee for the outer solver into an exact-regret certificate.

\subsection{Exact-regret Guarantee}
\label{app:proof-exact}

We now prove the main guarantee from the paper.

\begin{proof}[Proof of Theorem~\ref{thm:exact}]
By \ref{as:nonempty}, the exact outer problem in Eqn.~\eqref{eq:outer_true} admits an optimizer.
Let
\begin{equation}
\label{eq:choose_exact_opt}
e^\star \in \arg\max_{e \in \mathcal{E}} R(e).
\end{equation}
Then by definition,
\begin{equation}
\label{eq:estar_exact_opt}
R(e^\star)=\max_{e \in \mathcal{E}} R(e).
\end{equation}

Our goal is to lower-bound $R(\tilde e)$ in terms of $R(e^\star)$ and then substitute Eqn.~\eqref{eq:estar_exact_opt}.

\mypar{Step 1: compare the returned scenario to the best approximate-regret scenario}
By Lemma~\ref{thm:outer},
\begin{equation}
\label{eq:outer_used_exact}
\hat{R}(\tilde e)
\ge
\max_{e \in \mathcal{E}} \hat{R}(e)-\varepsilon.
\end{equation}
Since $e^\star \in \mathcal{E}$, the maximum over $\mathcal{E}$ is at least the value at $e^\star$.
Therefore,
\begin{equation}
\label{eq:compare_with_e_star}
\hat{R}(\tilde e)
\ge
\hat{R}(e^\star)-\varepsilon.
\end{equation}

\mypar{Step 2: relate exact and approximate regret at the returned scenario}
Applying Lemma~\ref{thm:pointwise} to the scenario $\tilde e$ gives
\begin{equation}
\label{eq:tilde_exact_ge_hat}
R(\tilde e) \ge \hat{R}(\tilde e).
\end{equation}

\mypar{Step 3: relate exact and approximate regret at the exact worst-case scenario}
Applying Lemma~\ref{thm:pointwise} to the scenario $e^\star$ gives
\[
0 \le R(e^\star)-\hat{R}(e^\star) \le \delta(e^\star).
\]
Rearranging the upper bound yields
\begin{equation}
\label{eq:e_star_hat_lower}
\hat{R}(e^\star) \ge R(e^\star)-\delta(e^\star).
\end{equation}

\mypar{Step 4: chain the inequalities}
Starting from Eqn.~\eqref{eq:tilde_exact_ge_hat} and then using Eqns.~\eqref{eq:compare_with_e_star} and~\eqref{eq:e_star_hat_lower}, we obtain
\begin{align}
R(\tilde e)
&\ge \hat{R}(\tilde e)
\label{eq:chain1}
\\
&\ge \hat{R}(e^\star)-\varepsilon
\label{eq:chain2}
\\
&\ge R(e^\star)-\delta(e^\star)-\varepsilon.
\label{eq:chain3}
\end{align}
Substituting Eqn.~\eqref{eq:estar_exact_opt} into Eqn.~\eqref{eq:chain3} gives
\begin{equation}
\label{eq:exact_main_lb}
R(\tilde e)
\ge
\max_{e \in \mathcal{E}} R(e)-\varepsilon-\delta(e^\star).
\end{equation}
This is exactly Eqn.~\eqref{eq:true_gap_alt} in the main text.

Finally, rearranging Eqn.~\eqref{eq:exact_main_lb} yields
\begin{equation}
\label{eq:exact_main_gap}
\max_{e \in \mathcal{E}} R(e)-R(\tilde e)
\le
\varepsilon+\delta(e^\star),
\end{equation}
which is Eqn.~\eqref{eq:true_gap} in the main text.
This completes the proof.
\end{proof}

\subsection{Interpretation of the Final Bound}
\label{app:interpretation_final_bound}

Theorem~\ref{thm:exact} has a clean interpretation.

First, the regret achieved by the returned scenario, $R(\tilde e)$, is itself a \emph{certificate value}.
It lower-bounds the worst-case exact regret over the feasible family up to the additive error $\varepsilon+\delta(e^\star)$.
Thus, if $R(\tilde e)$ is large, then the controller provably has a substantial avoidable performance gap in the searched regime.

Second, the theorem isolates the only two sources of looseness in the certificate:
\begin{itemize}[leftmargin=*, itemsep=0pt, topsep=2pt]
    \item $\varepsilon$ is the suboptimality of the outer search over scenarios.
    It is controlled by the quality of the RL-based outer solver.
    \item $\delta(e^\star)$ is the suboptimality of the practical reference at an exact worst-case scenario.
    It is controlled by the quality of the inner reference policy.
\end{itemize}

Third, the dependence on $\delta(e^\star)$ rather than a worst-case uniform bound over all $e \in \mathcal{E}$ is meaningful.
The theorem only needs the inner-reference quality at an exact optimizer of the true objective.
If the practical reference is especially accurate near the most informative failure scenarios, then the final certificate can still be tight even if the reference is less accurate elsewhere.

\subsection{A Uniform-error Corollary}
\label{app:uniform_cor}

In some settings, it is convenient to summarize inner-reference quality by a uniform bound.
The next corollary makes this explicit.

\begin{cor}
\label{cor:uniform_delta}
Suppose the assumptions of Theorem~\ref{thm:exact} hold, and in addition there exists a constant $\bar{\delta} \ge 0$ such that
\begin{equation}
\label{eq:uniform_delta}
\delta(e) \le \bar{\delta},
\qquad
\forall e \in \mathcal{E}.
\end{equation}
Then the returned scenario $\tilde e$ satisfies
\begin{equation}
\label{eq:uniform_delta_bound}
\max_{e \in \mathcal{E}} R(e)-R(\tilde e)
\le
\varepsilon+\bar{\delta},
\end{equation}
or equivalently,
\begin{equation}
\label{eq:uniform_delta_bound_alt}
R(\tilde e)
\ge
\max_{e \in \mathcal{E}} R(e)-\varepsilon-\bar{\delta}.
\end{equation}
\end{cor}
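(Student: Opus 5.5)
The corollary is a direct specialization of Theorem~\ref{thm:exact}, so the plan is to invoke that theorem unchanged and then weaken its error term using the uniform bound in Eqn.~\eqref{eq:uniform_delta}. No new estimates on the outer solver or the reference policy are needed; all of the work has already been done in Lemmas~\ref{thm:outer} and~\ref{thm:pointwise} and in the chain of inequalities in the proof of Theorem~\ref{thm:exact}.

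\textbf{Step 1: invoke the theorem.} Since \ref{as:nonempty}--\ref{as:oracle} hold, Theorem~\ref{thm:exact} gives, for an exact maximizer $e^\star \in \arg\max_{e\in\mathcal{E}} R(e)$ (which exists by \ref{as:nonempty}), the bound
\[
\max_{e \in \mathcal{E}} R(e)-R(\tilde e) \le \varepsilon+\delta(e^\star).
\]

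\textbf{Step 2: apply the uniform bound.} Because $e^\star \in \mathcal{E}$, Eqn.~\eqref{eq:uniform_delta} gives $\delta(e^\star)\le\bar{\delta}$. Substituting this into the display above yields Eqn.~\eqref{eq:uniform_delta_bound}. Rearranging, which is legitimate because every quantity is finite by \ref{as:finite}, gives Eqn.~\eqref{eq:uniform_delta_bound_alt}.

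The only point requiring care is membership: the uniform bound must be applied at a point of $\mathcal{E}$. This holds because $e^\star$ is taken from the $\arg\max$ over $\mathcal{E}$. The bound in Theorem~\ref{thm:exact} does not depend on which maximizer is chosen, and the uniform bound covers every maximizer, so non-uniqueness of $e^\star$ causes no difficulty. I expect no real obstacle beyond this.
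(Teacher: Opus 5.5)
Your proposal is correct and is the same argument the paper gives: invoke Theorem~\ref{thm:exact} to get $\max_{e\in\mathcal{E}}R(e)-R(\tilde e)\le\varepsilon+\delta(e^\star)$, use $e^\star\in\mathcal{E}$ to bound $\delta(e^\star)\le\bar{\delta}$, and rearrange. Your remarks on the membership of $e^\star$ and on finiteness via \ref{as:finite} make explicit what the paper leaves implicit.
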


\begin{proof}
By Theorem~\ref{thm:exact},
\[
\max_{e \in \mathcal{E}} R(e)-R(\tilde e)
\le
\varepsilon+\delta(e^\star).
\]
Under Eqn.~\eqref{eq:uniform_delta}, we have $\delta(e^\star)\le \bar{\delta}$.
Substituting this into the previous inequality yields Eqn.~\eqref{eq:uniform_delta_bound}.
Rearranging gives Eqn.~\eqref{eq:uniform_delta_bound_alt}.
\end{proof}

Corollary~\ref{cor:uniform_delta} is weaker than Theorem~\ref{thm:exact}, but sometimes easier to state.
It says that if the practical reference is uniformly within $\bar{\delta}$ of the exact reference everywhere, then the final exact-regret certificate loses at most $\varepsilon+\bar{\delta}$ compared to the ideal exact worst case.

\subsection{What the Guarantees Do and Do Not Say}
\label{app:scope_guarantees}

For completeness, we summarize the scope of the above results.

\mypar{What the guarantees establish}
The proofs show that, under \ref{as:nonempty}--\ref{as:oracle}, the returned scenario $\tilde e$ certifies a large avoidable performance gap for the pretrained controller.
More precisely, the exact regret value $R(\tilde e)$ is guaranteed to be close to the largest exact regret achievable over the feasible search space.
The approximation error is explicit and decomposes additively into outer-search error and inner-reference error.

\mypar{What the guarantees do not establish}
The guarantees do not claim that the returned scenario is unique.
They also do not claim that a particular nonconvex RL algorithm always meets \ref{as:solver}.
Instead, the theorem is conditional: \emph{if} the outer solver is $\varepsilon$-optimal for the implemented approximate objective, and \emph{if} the practical reference is within $\delta(e)$ of the exact reference, \emph{then} the returned scenario is a near-tight certificate for the exact worst-case regret over the feasible family.

\mypar{Why the conditional form is still useful}
This conditional structure matches how systems are analyzed in practice.
The theoretical formulation makes precise what quantity the method is targeting.
The theorem then shows exactly how optimization and approximation errors degrade the final certificate.
As the outer solver improves or the practical reference becomes stronger, the guarantee strengthens immediately and transparently.

\section{Use Cases}
\label{app:use-cases}

This appendix summarizes the three controller settings used throughout the paper.

\mypar{Adaptive bitrate streaming}
An adaptive bitrate controller chooses the bitrate of the next video chunk at each chunk boundary.
Its decision is based on the recent throughput history, the current playback buffer, and information about upcoming chunks.
Pensieve~\cite{mao_neural_2017_pensieve} is an RL-based instance of this setting.
Its objective is to keep video quality high over the full session while limiting rebuffering and avoiding abrupt bitrate swings.

\mypar{Congestion control}
A congestion-control sender adjusts its transmission behavior from transport feedback observed on the path, such as delay, loss, acknowledgments, and delivery rate.
An RL-based policy in this setting makes a new control decision at the beginning of each interval, whose duration is tied to the path dynamics.
Sage~\cite{yen2023sage} is the RL-based congestion-control system studied in this paper.
Its reward combines throughput, latency, and packet loss, so the controller must push traffic efficiently without driving the path into persistent queueing or loss.

\mypar{Load balancing}
A load balancer in a replicated distributed store assigns each arriving request to one of several candidate servers.
The balancer observes request-arrival patterns, recent request sizes, and its own estimate of outstanding work already sent to each server, but it does not directly observe each server's instantaneous internal utilization.
Park~\cite{mao2019park} is the RL-based policy we study in this setting.
Its goal is to route requests so that system-wide service remains efficient despite skewed arrivals, heterogeneous service speeds, and incomplete visibility into the servers' real-time state.

\section{Additional Implementation Details}
\label{app:additional-implementations}

\subsection{Bilevel Search Implementations for Pensieve and Park}
\label{app:maxregret-implementations}

\mypar{Adaptive bitrate streaming}
Pensieve~\cite{mao_neural_2017_pensieve} instantiates the same formulation in a chunk-level ABR simulator.
Here the unique design choice is not process synchronization, but an exact short-horizon reference oracle.
At each chunk, $u_t$ is the bandwidth used to download the next video chunk, and the simulator updates download time, buffer occupancy, rebuffering, sleep behavior, and QoE through Pensieve's standard dynamics.
The solver controls only this bandwidth process.
Pensieve still chooses bitrates through its normal observation interface, so the generated trace affects the controller only through measured throughput, delay, buffer state, next chunk sizes, and remaining chunks.

\sys applies Eqn.~\eqref{eq:impl_reference} exactly over a rolling window of $K$ chunks.
For the realized bandwidths and the buffer state at the beginning of the window, \sys enumerates candidate bitrate sequences in $\Pi$ and selects the sequence with the largest QoE reward from Table~\ref{tab:maxregret_impl}.
With six bitrate levels, this local search has size $6^K$, which is tractable for small $K$ and avoids the infeasible full-session search over $6^{48}$ sequences.
This rolling oracle gives the outer solver dense feedback about avoidable ABR mistakes, such as over-aggressive bitrate increases before a bandwidth drop or conservative decisions that miss short high-bandwidth opportunities.

\mypar{Load balancing}
Park~\cite{mao2019park} instantiates the formulation as closed-loop workload generation for load balancing.
At each load-balancing decision, $u_t$ specifies the next inter-arrival time and job size, and the simulator advances queues, running jobs, completion events, and active-job-time reward after Park dispatches the current job.
The action variables are typically log-scaled because both arrival gaps and job sizes span orders of magnitude.
This lets the solver express bursts, idle periods, small jobs, and large jobs without leaving the controlled workload family.

Park uses a set of traditional heuristics as its portfolio for $\Pi$: Least Completion Time, Join Shortest Queue, and Choose Fastest Server.
\sys applies Eqn.~\eqref{eq:impl_reference} by rolling each heuristic forward from the same simulator snapshot over the same generated workload window and selecting the best active-job-time reward.
The distinctive implementation issue is snapshot-consistent counterfactual evaluation.
The arrival stream, server speeds, current incoming job, and initial simulator state must be identical across policies, while each policy must have independent queues, running jobs, event timelines, and reward accounting.
Deep snapshots enforce this separation, so a positive gap means that Park is dominated by simple decision rules under the same workload rather than by an artifact of shared mutable state.
The resulting scenarios emphasize decision-sensitive workload patterns, such as bursts that make queue-length estimates misleading or size mixtures where a poor server choice creates head-of-line blocking.

\subsection{Protection Implementations for Pensieve and Park}
\label{app:shield-implementations}

\mypar{Adaptive bitrate streaming}
For Pensieve~\cite{mao_neural_2017_pensieve}, the protected controller consumes the $6 \times 8$ ABR state summarized in Table~\ref{tab:pensieve_state_features}.
This state contains histories of bitrate, buffer, throughput, delay, and chunks remaining, plus the next-chunk size vector for the available bitrate choices.
Pensieve's action space is discrete because each action selects one bitrate level.
For this reason, \sys implements the generic adjustment as a learned safe action interval rather than as a continuous nudge.
The rules infer lower and upper bitrate bounds $L(s_t)$ and $U(s_t)$, which define
$
\mathcal{A}_{\mathrm{safe}}(s_t)
:=
\{a \in \mathcal{A}: L(s_t) \le \mathrm{bitrate}(a) \le U(s_t)\}.
$
If Pensieve's chosen bitrate lies inside this set, \sys abstains.
If Pensieve chooses a bitrate above $U(s_t)$, \sys backs off by selecting Pensieve's highest-scored action inside $\mathcal{A}_{\mathrm{safe}}(s_t)$.
If Pensieve chooses a bitrate below $L(s_t)$, \sys pushes harder by selecting Pensieve's highest-scored action inside the safe set.
Thus, \sys masks unsafe bitrate choices while preserving Pensieve's ranking among the allowed actions:
$$
a_t^{\mathrm{prot}}
:=
\begin{cases}
a_t, & a_t \in \mathcal{A}_{\mathrm{safe}}(s_t),\\
\arg\max_{a \in \mathcal{A}_{\mathrm{safe}}(s_t)} p_t(a), & a_t \notin \mathcal{A}_{\mathrm{safe}}(s_t),
\end{cases}
$$
where $p_t(a)$ is Pensieve's action score or probability for bitrate action $a$.

\mypar{Load balancing}
For Park~\cite{mao2019park}, the protected controller observes the load-proxy state summarized in Table~\ref{tab:park_state_features} before dispatching an incoming job.
The original policy input contains only per-server load proxies and the incoming job size.
For rule learning, \sys derives summaries such as total load, mean load, load imbalance, server load ranks, and each server's load after hypothetically adding the incoming job from the same observation.
These derived features are not additional inputs to the original Park policy.
Park's action space is discrete because each action selects a server.
\sys implements protection as a risk rule followed by a per-server allow mask.
The learned rules define a safe action set
$
\mathcal{A}_{\mathrm{safe}}(s_t)
:=
\{i \in \mathcal{A}: \mathrm{AllowServer}_i(s_t)=1\}.
$
If no risk rule matches, all servers are treated as safe and \sys abstains.
If Park's selected server is in $\mathcal{A}_{\mathrm{safe}}(s_t)$, the action is left unchanged.
If the state is risky and Park's selected server is outside the safe set, \sys selects the highest-scored Park action among the allowed servers:
$$
a_t^{\mathrm{prot}}
:=
\begin{cases}
a_t, &
\substack{
\mathrm{Risky}(s_t)=0\\
\text{or } a_t \in \mathcal{A}_{\mathrm{safe}}(s_t)
},\\
\arg\max_{i \in \mathcal{A}_{\mathrm{safe}}(s_t)} p_t(i), &
\substack{
\mathrm{Risky}(s_t)=1\\
\text{and } a_t \notin \mathcal{A}_{\mathrm{safe}}(s_t)
}.
\end{cases}
$$
This mapping realizes \textsc{back\_off} and \textsc{push\_harder} as application-specific server redirections rather than scalar action changes.
For example, a rule may prevent Park from sending another small job to an overloaded slow server and redirect the decision to the best Park-scored server among those supported by reference heuristics.
\sys therefore constrains only risky dispatches while preserving Park's preferences whenever they are compatible with the learned safe set.

\section{Controller State Features}
\label{app:state-features}

This appendix summarizes the controller state features referenced by the protection implementation in Section~\ref{sec:shield}.

\begin{table*}[t]
\centering
\footnotesize
\begin{adjustbox}{width=\textwidth,center}
\begin{tabularx}{\textwidth}{>{\raggedright\arraybackslash}p{0.10\textwidth}>{\raggedright\arraybackslash}p{0.09\textwidth}>{\raggedright\arraybackslash}X}
\toprule
Obs. index & Raw col. & Feature summary \\
\midrule
0 & 2 & Current smoothed RTT, normalized as RTT divided by 100 ms. \\
1 & 3 & Current RTT variation in ms. \\
2 & 7 & Current delivery rate, normalized by bandwidth factor. \\
3 & 9 & TCP congestion-avoidance state. \\
4--12 & 10--18 & RTT rolling summaries over short, medium, and long windows: average, minimum, and maximum. \\
13--21 & 19--27 & Delivery-rate rolling summaries over short, medium, and long windows: average, minimum, and maximum. \\
22--30 & 28--36 & Min-RTT-ratio and RTT-rate rolling summaries over short, medium, and long windows: average, minimum, and maximum. \\
31--39 & 37--45 & RTT-variation rolling summaries over short, medium, and long windows: average, minimum, and maximum. \\
40--48 & 46--54 & Inflight and unacked-data rolling summaries over short, medium, and long windows: average, minimum, and maximum. \\
49--57 & 55--63 & Lost-packet rolling summaries over short, medium, and long windows: average, minimum, and maximum. \\
58 & 65 & Time delta since the previous observation. \\
59 & 66 & Current min-RTT-ratio or RTT-rate signal. \\
60 & 67 & Current loss signal, normalized by bandwidth factor. \\
61 & 68 & ACKed rate. \\
62 & 69 & Delivery-rate growth ratio relative to the previous window. \\
63 & 70 & Max-delivery-rate or cwnd-style utilization ratio. \\
64 & 71 & Current windowed delivery rate, normalized by bandwidth factor. \\
65 & 72 & cwnd-unacked ratio. \\
66 & 73 & Max delivery-rate growth ratio relative to the previous max. \\
67 & 74 & Max recent windowed delivery rate, normalized by bandwidth factor. \\
68 & 76 & Previous Sage action or action feature. \\
\bottomrule
\end{tabularx}
\end{adjustbox}
\caption{Sage policy observation features used by \sys.
Sage consumes 69 selected entries from a larger 77-field shared-memory telemetry message.
The raw fields excluded from the policy observation are timestamp, path or bandwidth scalar, RTO, ATO, pacing rate, slow-start threshold, one net-goodput field, and reward.
In particular, reward is present in the raw shared-memory message but excluded from the policy observation.}
\label{tab:sage_state_features}
\end{table*}

\begin{table*}[t]
\centering
\footnotesize
\begin{adjustbox}{width=\textwidth,center}
\begin{tabularx}{\textwidth}{>{\raggedright\arraybackslash}p{0.10\textwidth}>{\raggedright\arraybackslash}p{0.20\textwidth}>{\raggedright\arraybackslash}p{0.42\textwidth}>{\raggedright\arraybackslash}X}
\toprule
State row & Feature & Meaning & Normalization \\
\midrule
0 & Last selected bitrate & Bitrate selected for the most recently downloaded chunk. & $s_0=b_t/\max_b$ \\
1 & Buffer occupancy & Current playback buffer size. & $s_1=B_t/10$ \\
2 & Measured throughput & Throughput observed from the last chunk download. & $s_2=S_t/(d_t\cdot 1000)$ \\
3 & Download delay & Time taken to download the last chunk. & $s_3=d_t/(1000\cdot 10)$ \\
4 & Next chunk sizes & Byte sizes of the next video chunk at each available bitrate level. & $s_4[a]=S_{t+1}(a)/1000^2$ \\
5 & Chunks remaining & Number of chunks remaining in the video. & $s_5=\min(C_t,48)/48$ \\
\bottomrule
\end{tabularx}
\end{adjustbox}
\caption{Pensieve policy observation features used by \sys.
Pensieve represents state as a $6 \times 8$ matrix: six feature channels over the most recent eight decision steps.
The bitrate levels are 300, 750, 1200, 1850, 2850, and 4300 Kbps.
Here $B_t$ is buffer seconds, $S_t$ is downloaded chunk size in bytes, $d_t$ is download delay in ms, $S_{t+1}(a)$ is the next-chunk size at bitrate action $a$, and $C_t$ is chunks remaining.
The observation contains history for bitrate, buffer, throughput, delay, and chunks remaining, plus the current next-chunk size vector for the six bitrate choices.}
\label{tab:pensieve_state_features}
\end{table*}

\begin{table*}[t]
\centering
\footnotesize
\begin{adjustbox}{width=\textwidth,center}
\begin{tabularx}{\textwidth}{>{\raggedright\arraybackslash}p{0.20\textwidth}>{\raggedright\arraybackslash}p{0.34\textwidth}>{\raggedright\arraybackslash}X}
\toprule
Component & Feature form & Meaning \\
\midrule
Policy observation & $s_t=[L_0,L_1,\ldots,L_{N-1},J_t]$ & Park observes one load proxy per server and the incoming job size, with $N=10$ in the default setup. \\
Server load proxies & \texttt{LoadProxy0}, \ldots, \texttt{LoadProxy9} & Each $L_i$ estimates the observable load on server $i$. \\
Load-proxy value & $L_i=\sum_{j\in Q_i}\mathrm{size}(j)+\mathbf{1}[\text{server }i\text{ is running}]\cdot\max(0,\mathrm{finish\_time}_i-t)$ & The proxy combines queued job sizes with the remaining time of the currently running job. \\
Incoming job & \texttt{IncomingSize}, with $J_t=\mathrm{size}(\text{incoming job})$ & The size of the job currently waiting to be assigned. \\
Clipping & \texttt{obs\_high}, typically 500000.0 & All observation values are clipped by the observation upper bound. \\
Not in policy input & Future arrivals, oracle gap, reference actions, explicit queue lengths, server identities beyond vector position, and rule labels & These quantities are not part of the original Park policy observation. \\
Rule-derived features & Total load, mean load, minimum and maximum load, load ranks, and load-plus-incoming per server & These interpretable features are derived from the same observation for rule learning and are not extra inputs to the original Park policy. \\
\bottomrule
\end{tabularx}
\end{adjustbox}
\caption{Park policy observation and rule-derived features.
The original Park policy sees only per-server load proxies plus the incoming job size.
\sys can derive additional interpretable predicates from the same observation without changing the policy input.}
\label{tab:park_state_features}
\end{table*}

\section{Additional Evaluation Statistics}
\label{app:eval-details}

This appendix records the detailed numeric values, secondary comparisons, and per-method breakdowns omitted from Section~\ref{sec:eval}.

\subsection{Detailed Results for \ref{e1}}

For Pensieve, \sys reaches mean log performance gap 4.84, versus 4.00 for Random, 1.29 for Indago, 1.21 for Genet, 1.19 for Gilad et al., and 1.09 on Normal.
On the original scale, that is 6.95\x the Random gap and 5641\x the Normal gap, with maximum 5.14.
For Sage, \sys reaches mean relative performance gap 63.84\%, versus 40.66\% for Random and 1.66\% on Normal, with maximum 70.21\%.
For Park, \sys reaches mean relative performance gap 43.49\%, versus 7.05\% for Gilad et al. and 0.29\% on Normal, with maximum 203.80\%.

\subsection{Detailed Results for \ref{e2} and \ref{e3}}

On the most challenging family for each application, \sys removes 85.04\% of the normalized performance gap for Pensieve, 80.01\% for Sage, and 79.12\% for Park.
For transfer, the same \sys configuration removes 75.15\% on Random and 22.09\% on Indago for Pensieve, 90.08\% on Gilad et al. and 98.54\% on Genet for Sage, and 68.20\% on Gilad et al. and 36.01\% on Indago for Park.
On Normal, \sys still reduces the normalized performance gap by 18.72\% for Pensieve, 33.38\% for Sage, and 36.11\% for Park.
By contrast, retraining baselines increase the Normal gap by up to 24.25\% for Pensieve, approximately 15\x for Sage, and 81.11\% for Park.
Fig.~\ref{fig:ablation} provides the complementary ablation result.
Across all three applications, the higher-gap counterfactual sources in that figure consistently yield higher average protection percentages, while weaker sources yield much weaker protection.

\subsection{Detailed Results for \ref{e4} and \ref{e5}}

For Pensieve, mean decision-time ratio is 3.67\% without protection and 11.22\%, 32.36\%, 17.88\%, and 27.98\% across ReGuard iterations 0--3.
For Sage, the corresponding values are 41.37\% without protection and 50.11\%, 47.55\%, 49.11\%, and 48.28\% across ReGuard iterations 0--3.
For Park, the corresponding values are 2.00\% without protection and 2.41\%, 4.56\%, 6.52\%, and 1.85\% across ReGuard iterations 0--3.
Across all three systems, every iteration remains below 100\% of the available decision budget, and later iterations do not induce systematic runtime growth despite stronger protection.
For Pensieve, the mean gap drops from 4.84 at iteration 0 to 1.93, 1.22, and 1.24 across iterations 1--3, closing 96.6\% of the distance to Normal by iteration 2.
For Sage, the mean gap drops from 63.84\% to 15.28\%, 3.82\%, and 3.78\%, a 94.08\% reduction by iteration 3 and a 96.59\% reduction in distance to Normal.
For Park, the mean gap drops from 33.45\% to 6.16\%, 3.21\%, and 2.43\%, a 92.73\% reduction by iteration 3 and a 93.53\% reduction in distance to Normal.

\section{Additional Failure Analyses}
\label{app:additional-analyses}

Due to space constraints, the main paper keeps only the Park case study in Section~\ref{sec:analysis}.
This appendix contains detailed supporting statistics for Park and the corresponding Pensieve and Sage analyses.
Below, $\mathrm{p}k$ denotes the corresponding $k$th-percentile predicate computed from normal scenarios (\S\ref{sec:shield}).

\subsection{Park: Detailed supporting statistics}
\label{app:analysis-park-details}

This subsection records the detailed quantitative support omitted from Section~\ref{sec:analysis}.
In Fig.~\ref{fig:park_failures1}, the 107-decision slice consists of 69.2\% jobs of at most 0.2KB, including 45.8\% in the 0.126--0.525KB range that the fast servers can easily absorb, while only 10.3\% are 5.0KB jobs appearing in four short bursts.
For jobs of at most 0.2KB, Park chooses a slow server 97.3\% of the time and server 1 alone 86.5\% of the time, whereas Least Completion Time routes 81.1\% of those same jobs to the fast tier and 71.6\% to server 9.
The large-gap region covers 76.6\% of the slice.
During that region, at least one fast server is idle on every step and server 9 is idle on every step, yet Park still sends traffic to a slow server on 86.0\% of steps and to server 1 on 76.6\% of steps.
Inside that same region, Park never chooses a fast server, achieves mean chosen service rate 0.25 versus 0.97 for Least Completion Time, selects a zero-load server 0.0\% of the time, and incurs mean regret 6.24.

Fig.~\ref{fig:park_failures2} shows the same mechanism at workload scale.
Under the challenging scenarios discovered by \sys, Park places 76.3\% of all decisions on server 1, up from 5.6\% on normal workloads.
Its slow-server share rises from 29.1\% to 81.5\%, while its fast-server share falls from 53.1\% to 14.1\%.
Least Completion Time under the same scenarios still routes 80.6\% of decisions to the fast tier and 56.8\% to server 9.
At least one fast server remains idle on 98.0\% of steps, yet Park still chooses a slow server with a fast server idle on 81.3\% of steps and chooses server 1 while server 9 is idle on 60.8\% of steps.

The recovery rules also show that reopening the fast tier is not limited to a single server.
One additional correction rule is
\[
\begin{aligned}
&(\mathrm{Load}_9 \le \mathrm{p25}) \land (\mathrm{PostLoad}_2 \le \mathrm{p25}) \\
&{}\land (\mathrm{MeanLoad} \le \mathrm{p25}) \land (\mathrm{StdLoad} \le \mathrm{p25}) \implies \mathrm{Allow}_7=\texttt{true} .
\end{aligned}
\]
This rule says that one fast server already looks light, another candidate would still remain light after taking the job, and the overall cluster is neither heavily loaded nor badly imbalanced.
It shows that \sys is reopening the fast tier rather than hard-coding a single dispatch.

\subsection{Pensieve: Sparse-bandwidth regimes trigger bitrate overshoot}
\label{app:analysis-pensieve}

\begin{figure}[t]
    \centering
    \begin{adjustbox}{width=0.99\linewidth,center=0pt}
    \includegraphics[width=\linewidth]{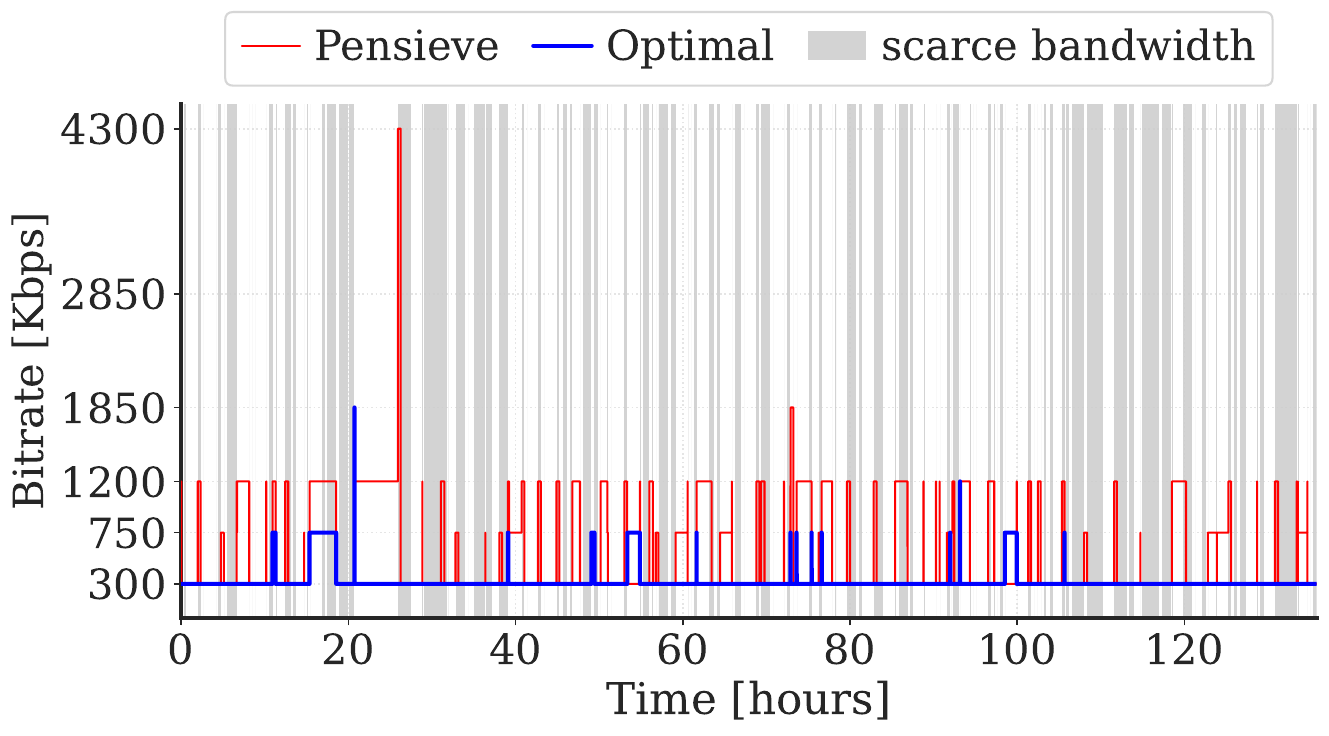}
    \end{adjustbox}
\caption{Sparse-bandwidth scenarios expose a systematic over-aggressive bitrate policy in Pensieve.
Even when the realized link remains near the bottom of the bitrate ladder, Pensieve repeatedly selects much larger rates than the oracle, creating a large avoidable performance gap.}
\label{fig:pensieve_failures}
\end{figure}

Pensieve's failure is sustained over-aggression under long sparse-bandwidth regimes rather than a few isolated bitrate mistakes.
Fig.~\ref{fig:pensieve_failures} shows scarce-bandwidth spans that cover 35.5\% of a trace duration but contain 78.6\% of all 6,248 bitrate decisions.
This concentration matters because the failure occupies a dominant operating regime instead of a few outliers.
Within that regime, Pensieve selects 465.6Kbps on average, whereas the oracle selects 304.2Kbps, so Pensieve is 53.1\% higher.
When Pensieve overshoots, the overshoot is substantial rather than marginal.
On the 21.3\% of decisions where Pensieve exceeds the oracle, it averages 1073.0Kbps versus the oracle's 308.8Kbps, with a mean gap of 764.2Kbps and a median gap of 900Kbps.
Even inside scarce-bandwidth periods alone, Pensieve still averages 422.6Kbps versus 300.9Kbps for the oracle and chooses at least 1200Kbps on 10.4\% of steps, while the oracle does so on only 0.04\%.
The largest mismatch reaches 4300Kbps for Pensieve versus 300Kbps for the oracle under realized bandwidth of only 0.001Mbps.

The mechanism behind this failure is straightforward.
Under repeated bandwidth scarcity, Pensieve keeps treating the state as compatible with aggressive upgrades even after multiple recent throughput samples collapse and chunk-download delays explode.
The result is not merely a suboptimal bitrate ranking.
It is a repeated decision to request chunks whose sizes the sparse link cannot support, which magnifies rebuffering and drives the large performance gap in Fig.~\ref{fig:pensieve_failures}.

The logic rules used by \sys make that mechanism explicit.
One representative rule is
\[
\begin{aligned}
&(\mathrm{Buffer} \le \mathrm{p10}) \land (\mathrm{Throughput}_4 \le \mathrm{p10}) \\
&{}\land (\mathrm{Delay}_3 > \mathrm{p95}) \land (\mathrm{NextChunk}_3 > \mathrm{p90}) \implies \mathrm{Bitrate}\le750 .
\end{aligned}
\]
This rule says that when the buffer is already shallow, several recent throughput samples are in the bottom decile, recent download delays are in the worst 5\%, and even the next mid-range chunk is large, Pensieve should not climb above 750Kbps.
Its semantic meaning matches the failure directly: sparse bandwidth is not a one-step dip, so aggressive upgrades only deepen the mismatch between requested chunks and available capacity.

A second rule is an even harder clamp:
\[
\begin{aligned}
&(\mathrm{Throughput}_5 \le \mathrm{p25}) \land (\mathrm{Throughput}_6 \le \mathrm{p10}) \\
&{}\land (\mathrm{Delay}_7 > \mathrm{p95}) \land (\mathrm{NextChunk}_2 > \mathrm{p10}) \implies \mathrm{Bitrate}\le300 .
\end{aligned}
\]
This rule says that when several consecutive throughput samples are near the floor, the most recent delay is extreme, and even moderate bitrate choices imply large chunks, the next action should be capped at the minimum bitrate.
That is precisely the kind of human-readable explanation the figure calls for.
The rule does not say that the network is simply bad.
It says that this particular combination of sparse recent throughput, severe delay, and large pending chunks is a high-impact state in which Pensieve's aggressive policy is likely to fail.

\subsection{Sage: Sharp drops trigger overreaction and slow reopening}
\label{app:analysis-sage}

\begin{figure}[t]
    \centering
    \begin{adjustbox}{width=0.99\linewidth,center=0pt}
    \includegraphics[width=\linewidth]{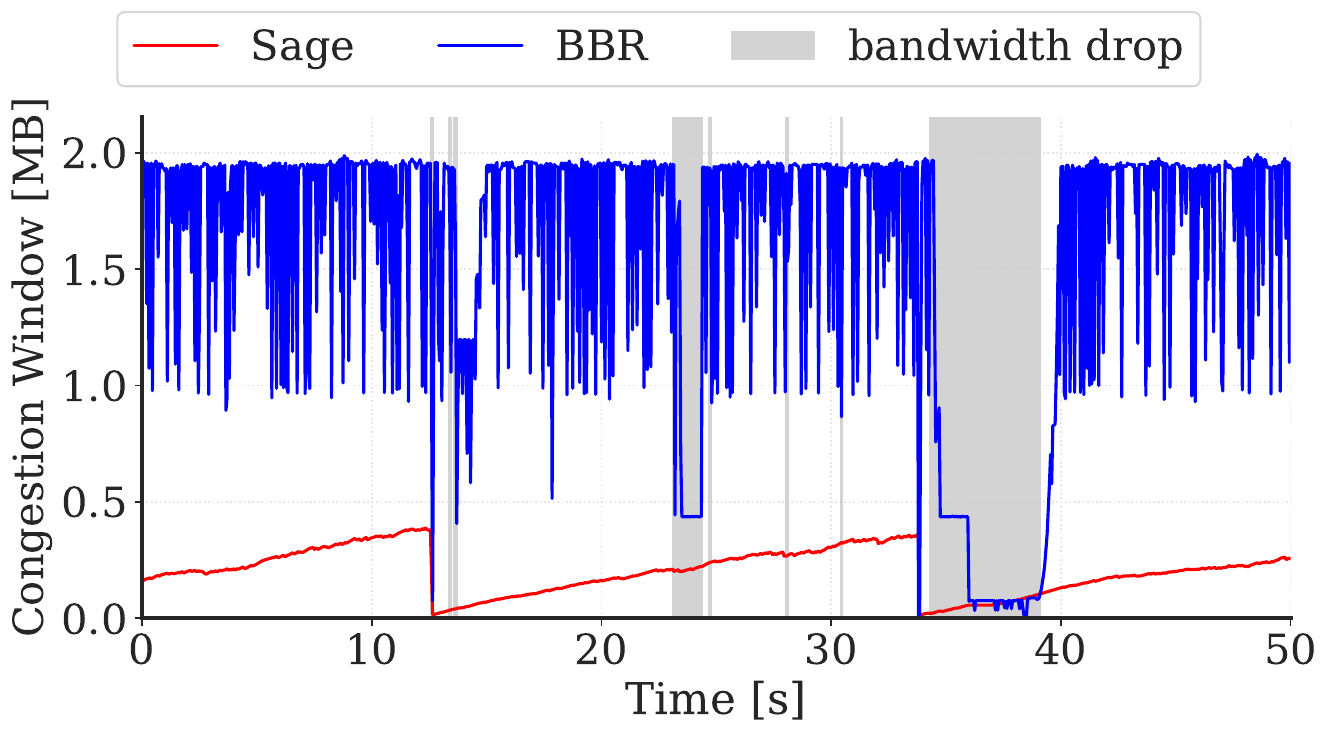}
    \end{adjustbox}
\caption{Sharp bandwidth drops expose Sage's slow recovery.
Sage collapses its congestion window far more than BBR and reopens it much more slowly, leaving throughput far below what the path can still support.}
\label{fig:sage_failures}
\end{figure}

Sage's failure is not generic weakness on a challenging path.
As Fig.~\ref{fig:sage_failures} shows, the dominant mechanism is an excessive reaction to bandwidth drops followed by very slow reopening.
Across the plotted 50s span, Sage's congestion window averages 0.20MB, or 139 packets, whereas BBR averages 1.56MB, or 1066 packets.
The medians are 0.21MB for Sage and 1.94MB for BBR, so BBR's window is 9.4\x larger at the median and larger than Sage's at 96.9\% of sampled times.
The sharpest drop, from 150Mbps to 5Mbps at 12.6s, makes Sage collapse from 261 packets to 11 packets.
Sage then reaches only 50 packets after 2.45s, 100 packets after 6.3s, and half of its pre-drop window only after 9.15s.
BBR also cuts back at that instant, but it rebounds to 827 packets within 50ms and to 1342 packets within 100ms, regaining half of its pre-drop window in just 50ms.
Over the first second after the drop, Sage averages only 19 packets while BBR averages 1097 packets, a 58\x gap.
Even after the longest 5Mbps episode ends, BBR climbs back above 1000 packets within 0.95s of bandwidth returning to 150Mbps, whereas Sage still needs 1.85s merely to exceed 100 packets.

The key weakness is therefore not the initial backoff alone.
It is that Sage continues acting as if the path remains dangerous long after the immediate drop has passed.
That persistent conservatism keeps its congestion window tiny, which in turn keeps realized throughput tiny.
Fig.~\ref{fig:sage_failures} therefore exposes a specific control failure: Sage brakes hard on sudden drops and then reopens far too slowly.

The onset of that failure appears in rules such as
\[
\begin{aligned}
&(\mathrm{LossDelta} \ge \mathrm{p95}) \land (\mathrm{LossMin} \ge \mathrm{p95}) \\
&{}\land (\mathrm{Rtt} \le \mathrm{p25}) \land (\mathrm{WindowedRate} \le \mathrm{p10}) \implies \textsc{push\_harder}.
\end{aligned}
\]
This rule says that loss has just surged, but RTT is still low while the windowed delivery rate has already fallen into the bottom decile.
That combination is a compact signature of overreaction.
The controller has already slammed on the brakes, yet the latency signal does not justify staying that conservative.

The aftermath appears in rules such as
\[
\begin{aligned}
&(\mathrm{LossMax} \le \mathrm{p25}) \land (\mathrm{LossMin} \le \mathrm{p25}) \\
&{}\land (\mathrm{MaxRate} \le \mathrm{p10}) \land (\mathrm{AvgRate} \le \mathrm{p10}) \implies \textsc{push\_harder}.
\end{aligned}
\]
This rule is important because it no longer describes an acute crisis.
Loss is now low, RTT inflation is not severe, and yet both the recent maximum and the recent average delivery rates remain near the floor.
In plain language, the network is no longer screaming ``back off,'' but Sage is still behaving as though it is.

A third rule isolates the slow-recovery phase even more directly:
\[
\begin{aligned}
&(\mathrm{CurrentRate} \le \mathrm{p10}) \land (\mathrm{RateGrowth} \le \mathrm{p10}) \\
&{}\land (\mathrm{AvgRate} \le \mathrm{p10}) \land (\mathrm{RateVsMax} \le \mathrm{p25}) \implies \textsc{push\_harder}.
\end{aligned}
\]
This rule says that Sage is not only small in absolute terms.
It is growing slowly and operating far below what it was recently capable of.
That is exactly the behavior seen in Fig.~\ref{fig:sage_failures}, and it makes the failure interpretable in a way that the raw congestion-window trace alone cannot.

\section{Testbed}
\label{apdx:testbed}
For fair evaluation, we conduct all experiments on a two-socket server with 40 logical Intel Xeon E5-2660v3 CPUs running at 2.60GHz and 256GiB of DRAM.

\end{document}
